\newcolumntype{P}[1]{>{\centering\arraybackslash}p{#1}} 
\newcolumntype{M}{>{\centering\arraybackslash}m} 
\newtheorem{theorem}{Theorem}[section]
\newtheorem{lemma}[theorem]{Lemma}
\newtheorem{claim}[theorem]{Claim}
\newtheorem{remark}[theorem]{Remark}
\newtheorem*{theorem*}{Theorem}
\newtheorem{definition}[theorem]{Definition}
\newtheorem{question}{Question}
\definecolor{darkgreen}{rgb}{0,0.5,0}
\algnewcommand\algorithmicswitch{\textbf{switch}}
\algnewcommand\algorithmiccase{\textbf{case}}
\newcommand{\Oish}{\widetilde{O}}
\newcommand{\eps}{\varepsilon}
\newcommand{\dist}{\mbox{\tt dist}}
\renewcommand{\paragraph}[1]{\vspace{0.15cm}\noindent {\bf #1}:}
\newcommand{\SWFT}{\mathsf{ComputeSourcewiseFT}\xspace}
\newcommand{\tree}{\mathcal{T}}
\newcommand{\FullOrShort}{full}
\def\DP{\mbox{\tt DP}}
\def\LastE{\mbox{\tt LastE}}
  \newcommand{\fullOnly}[1]{#1}
  \newcommand{\shortOnly}[1]{}
    \newcommand{\fullOnly}[1]{}
    \newcommand{\IncludePictures}[1]{}
\newcommand{\fab}[1]{\textcolor{red}{#1}}
  \def\rem#1{{\marginpar{\raggedright\scriptsize #1}}}
   \newcommand{\fabr}[1]{\rem{\textcolor{red}{$\bullet$ #1}}}
   \newcommand{\merr}[1]{\rem{\textcolor{blue}{$\bullet$ #1}}}
   \newcommand{\grer}[1]{\rem{\textcolor{green}{$\bullet$ #1}}}
   \newcommand{\virr}[1]{\rem{\textcolor{cyan}{$\bullet$ #1}}}  
  \newcommand{\fab}[1]{#1}
  \newcommand{\fabr}[1]{}
  \newcommand{\merr}[1]{}
  \newcommand{\grer}[1]{}
  \newcommand{\virr}[1]{}
\begin{document}

\date{}

\title{Preserving Distances in Very Faulty Graphs}
\author{
 Greg Bodwin\\
Stanford, gbodwin@stanford.edu
\and
 Fabrizio Grandoni\\
  IDSIA, USI-SUPSI, fabrizio@idsia.ch
\and
Merav Parter\\
	CSAIL, MIT, parter@mit.edu
		\and
Virginia Vassilevska Williams \\
	Stanford, virgi@cs.stanford.edu	
}

%

\maketitle

\thispagestyle{empty}
\setcounter{page}{0}

\begin{abstract}
Preservers and additive spanners are sparse (hence cheap to store) subgraphs that preserve the distances between given pairs of nodes
exactly or with some small additive error, respectively. Since real-world networks are prone to failures, it makes sense to study fault-tolerant versions of the above structures. This turns out to be a surprisingly difficult task. 
For every small but arbitrary set of edge or vertex failures, the preservers and spanners need to contain {\em replacement paths} around the faulted set. Unfortunately, 
the complexity of the interaction between replacement paths blows up significantly, even from $1$ to $2$ faults, and the structure of optimal preservers and spanners is poorly understood. In particular, no nontrivial bounds for preservers and additive spanners are known when the number of faults is bigger than $2$.

Even the answer to the following innocent question is completely unknown: what is the worst-case size of a preserver for a \emph{single pair} of nodes in the presence of $f$ edge faults? There are no super-linear lower bounds, nor subquadratic upper bounds for $f>2$. In this paper we make substantial progress on this and other fundamental questions:

\begin{itemize}
\item[$\bullet$] We present the first truly sub-quadratic size single-pair preservers in unweighted (possibly directed) graphs for \emph{any} fixed number $f$ of faults. Our result indeed generalizes to the single-source case, and can be used to build new fault-tolerant additive spanners (for all pairs). 

\item[$\bullet$] The size of the above single-pair preservers is $O(n^{2-g(f)})$ for some positive function $g$, and grows to $O(n^2)$ for increasing $f$. We show that this is necessary even in undirected unweighted graphs, and even if you allow for a small additive error: If you aim at size $O(n^{2-\eps})$ for $\eps>0$, then the additive error has to be $\Omega(\eps f)$. This surprisingly matches known upper bounds in the literature. 

\item[$\bullet$] For weighted graphs, we provide matching upper and lower bounds for the single pair case. Namely, the size of the preserver is $\Theta(n^2)$ for $f\geq 2$ in both directed and undirected graphs, while for $f=1$ the size is $\Theta(n)$ in undirected graphs. For directed graphs, we have a superlinear upper bound and a matching lower bound.
\end{itemize}
Most of our lower bounds extend to the distance oracle setting, where rather than a subgraph we ask for any compact data structure.
\end{abstract}



\section{Introduction}
Distance preservers and additive spanners are (sparse) subgraphs that preserve, either exactly or with some small additive error, the distances between given critical pairs $P$ of nodes. This has been a subject of intense research in the last two decades \cite{celkin,bv-soda16,althofer,aingworth,Chechik13,BaKaMePe05,AmirGreg,Pettie09}. 

However, real-world networks are prone to failures. For this reason, more recently (e.g. \cite{ChechikLPR09,BraunschvigCPS15,chechik2010rigid,PPFTBFS13,parter2014vertex,bilo2014fault,parter2014fault,bilo2015improved,dinitz2011fault,levcopoulos2002improved,czumaj2004fault,lukovszki1999new}) researchers have devoted their attention to fault-tolerant versions of the above structures, where distances are (approximately) preserved also in the presence of a few edge (or vertex) faults. For the sake of simplicity we focus here on edge faults, but many results generalize to the case of vertex faults where $F\subseteq V$. 
\begin{definition}
Given an $n$-node graph $G=(V,E)$ and $P\subseteq V\times V$, a subgraph $H\subseteq G$ is an $f$-fault tolerant ($f$-FT) $\beta$-additive $P$-pairwise spanner if
$$
dist_{H\setminus F}(s,t)\leq dist_{G\setminus F}(s,t)+\beta,\quad\forall (s,t)\in P, \forall F\subseteq E, |F|\leq f.
$$ 
If $\beta=0$, then $H$ is an $f$-FT $P$-pairwise preserver.
\end{definition}
Finding sparse FT spanners/preservers turned out to be an incredibly challenging task. Despite intensive research, many simple questions have remained open, the most striking of which arguably is the following:

\begin{question}\label{que:linear} What is the worst-case size of a  preserver for a single pair $(s,t)$ and $f\geq 1$ faults?
\end{question} 
Prior work~\cite{ParterPODC15,PPFTBFS13} considered the single-source $\fab{P=}\{s\}\times V$ unweighted case, providing super-linear lower bounds for any $f$ and tight upper bounds for $f=1,2$. However, first, there is nothing known for $f>2$, and second, the lower bounds for the $\{s\}\times V$ case do not apply to
the single pair case where much sparser preservers might exist. Prior to this work, it was conceivable that in this case $O(n)$ edges suffice for arbitrary fixed $f$.

Our first result is a {\em complete} answer to Question \ref{que:linear} for weighted graphs. In more detail, we prove:
\begin{itemize}
\item[$\bullet$] An $(s,t)$ preserver in a weighted graph for $f=1$ has size $\Theta(n)$ in the undirected setting \fab{(Theorem \ref{thm:uwstub}, with extensions in Theorem \ref{thr:UBqpairs})} or $\Theta(\DP(n))$ in the directed setting \fab{(Theorem \ref{thr:UBLB1FT})}. 
\item[$\bullet$] An $(s,t)$ preserver in a weighted graph for $f\geq 2$ has size $\Theta(n^2)$ even in the undirected case \fab{(Theorem \ref{thr:LBweighted2FT})}. 
\end{itemize}
The function $\DP(n)$ above denotes a tight bound for the sparsity of a pairwise distance preserver in directed weighted graphs with $n$ nodes and $O(n)$ pairs. Coppersmith and Elkin~\cite{celkin} show that $\Omega(n^{4/3}) \leq \DP(n)\leq O(n^{3/2})$. It is a major open question to close this gap, and we show that the no-fault $n$-pair distance preserver question is equivalent to the $1$-fault single pair preserver question, thereby fully answering the latter question, up to resolving the major open problem for $n$-pair preservers.

For unweighted graphs, we achieve several non-trivial lower bounds concerning the worst-case size of $(s,t)$ preservers and spanners: 
\begin{itemize}
\item[$\bullet$] In the unweighted directed or undirected case this size is $\Theta(n)$ for $f=1$. This shows an interesting gap w.r.t. to the weighted case mentioned before.

\item[$\bullet$] The size is super-linear for any $f\geq 2$ even in unweighted undirected graphs and even if we allow a small enough polynomial additive error $n^{\delta}$.
\end{itemize}

Note that the latter lower bound (unlike in the weighted case) leaves room for improvements. In particular, consider the following question:
\begin{question}\label{que:subquadratic} In unweighted graphs, is the worst-case size of an $f$-FT $(s,t)$ preserver subquadratic for every constant $f\geq 2$?
\end{question} 
Prior work showed that the answer is YES for $f=\fab{1,}2$ \cite{ParterPODC15,parter2014fault}, but nothing is known for $f\geq 3$. We show that the answer is YES:
\begin{itemize}
\item[$\bullet$] In unweighted directed or undirected graphs, for any $f\geq 1$ there is an $(s,t)$ preserver of size $O(n^{2-g(f)})$ for some positive decreasing function $g(\cdot)$. \fab{See Theorem \ref{thm:multisourcef}.}
\end{itemize}

The above result has many strengths.
First, it extends to the single-source case (i.e., $P=\{s\}\times V$). 
Second, the same result holds for {\em any} fixed number $f$ of vertex faults. Prior work was only able to address the simple case $f=1$~\cite{parter2014vertex}.
Third, such a preserver can be computed very efficiently in $O(fmn)$ time, and its analysis is relatively simple (e.g., compared to the slightly better size bound in \cite{ParterPODC15} that was achieved by a cumbersome case analysis).
Finally, via fairly standard techniques, the preserver result also implies improved $f$-FT $2$-additive (all pairs!) spanners for all $f\geq 1$ \fab{(see Theorem \ref{thr:additivespanner})}.

In the above result the size of the preserver grows quickly to $O(n^2$) for increasing $f$. This raises the following new question:
\begin{question}\label{que:strongly} Does there exist a universal constant $\eps>0$ such that all unweighted graphs 
have an $f$-FT $(s,t)$ preserver of size $O_f(n^{2-\eps})$? What if we allow a small additive error? 
\end{question}
The only result with \emph{strongly sub-quadratic} size in the above sense is an $O(f\cdot n^{4/3})$ size spanner with additive error $\Theta(f)$ \cite{BraunschvigCPS15,bilo2015improved}. Can we remove or reduce the dependence of the error on $f$? We show that the answer is NO:
\begin{itemize}
\item[$\bullet$] In undirected unweighted graphs, any single-pair spanner of strongly subquadratic size $O_{\fab{f}}(n^{2-\eps})$ for $\eps>0$ needs to have additive error $\Omega(\eps f)$. (\fab{See Theorem \ref{thr:LBadditive} and related results in Theorems \ref{thm:subs-lb-improvement}-\ref{thm:gap-pair-lb}).}
\end{itemize}
Hence the linear dependence in $f$ in the additive error in \cite{BraunschvigCPS15,bilo2015improved} is indeed necessary.  We found this very surprising.
The table in Appendix \ref{sec:tabFig} summarizes our main results for FT-preservers. 

So far we have focused on sparse distance preserving {\em subgraphs}. However, suppose that the distance estimates can be stored in a different way in memory. Data structures that store the distance information of a graph in the presence of faults are called {\em distance sensitivity oracles}. Distance sensitivity oracles are also intensely studied \cite{demetrescu2008oracles,BernsteinK09,weimann2013replacement,GrandoniW12,duan2009dual,DP17}.
Our main goal here is to keep the size of the data structure as small as possible. Other typical goals are to minimize preprocessing and query time - we will not address these.
\begin{question}\label{que:oracle} How much space do we need to preserve (exactly or with a small additive error) the distances between a given pair of nodes in the presence of $f$ faults?
\end{question} 
Clearly all our preserver/spanner upper bounds extend to the oracle case, however the lower bounds might not: in principle a distance oracle can use much less space than a preserver/spanner with the same accuracy. Our main contribution here is the following incompressibility result:\fabr{I put the claim back}
\begin{itemize}
\item[$\bullet$] The worst-case size of a single-pair exact distance sensitivity oracle in directed or undirected weighted graphs is $\Theta(n^2)$ for $f \geq  2$ (note that the optimal size for $f=1$ is $\Theta(n)$ by simple folklore arguments, so our result completes these settings). See Theorem \ref{thr:LBweighted2FT}.
\item[$\bullet$] If we allow for a polynomial additive error $n^{\delta}$, for small $\delta$, even in the setting of undirected unweighted graphs, then the size of the oracle has to be super-linear already for $f\geq 3$ \fab{(Theorem \ref{thm:gap-compression-lb})}.
\end{itemize}

The technical part of the paper has precise theorem statements for all results.
The interested reader will find even more results and corollaries there as well. We omitted these from this introduction for the sake of clarity.


\subsection{Related Work}
Fault-tolerant spanners were introduced in the geometric setting \cite{levcopoulos2002improved} (see also \cite{lukovszki1999new,czumaj2004fault}). FT-spanners with multiplicative stretch are relatively well understood: the error/sparsity for $f$-FT and $f$-VFT multiplicative spanners is (up to a small polynomial factor in $f$) the same as in the nonfaulty case. For $f$ edge faults, 
Chechik et al.~\cite{ChechikLPR09}
 showed how to construct $f$-FT $(2k-1)$-multiplicative spanners with size $\tilde{O}(fn^{1+\frac{1}{k}})$ for any $f, k \geq 1$. 
They also construct an $f$-VFT spanner with the same stretch and larger size.
This was later improved by Dinitz and Krauthgamer~\cite{dinitz2011fault} who showed the construction of $f$-VFT spanners with $2k-1$ error and $\tilde{O}\left(f^{2-\frac{1}{k}}n^{1+\frac{1}{k}}\right)$ edges.

FT additive spanners were first considered by Braunschvig, Chechik and Peleg in \cite{BraunschvigCPS15} (see also \cite{bilo2015improved} for slightly improved results). They showed that FT $\Theta(f)$-additive spanners can be constructed by combining FT multiplicative spanners with (non-faulty) additive spanners. This construction, however, supports only edge faults. Parter and Peleg showed in \cite{parter2014fault} a lower bound of $\Omega(n^{1+\eps_{\beta}})$ edges for single-source FT $\beta$-additive spanners. They also provided a construction of single-source FT-spanner with additive stretch $4$ and $O(n^{4/3})$ edges that is resilient to one edge fault. The first constructions of FT-additive spanners resilient against \emph{one vertex} fault were given in \cite{parter2014vertex} and later on in \cite{bilo2015improved}. Prior to our work, no construction of FT-additive spanners was known for $f\geq 2$ vertex faults. 

As mentioned earlier, the computation of preservers and spanners in the non-faulty case (i.e. when $f=0$) has been the subject of intense research in the last few decades. The current-best preservers can be found in \cite{celkin, bv-soda16, Bodwin17}. Spanners are also well understood, both for multiplicative stretch \cite{althofer,erdos} and for additive stretch \cite{aingworth,Chechik13,BaKaMePe05,Woodruff10,AmirGreg,bv-soda16,Chechik13,Pettie09,ABP17}. There are also a few results on ``mixed'' spanners with both multiplicative and additive stretch \cite{ElkinP04,ThorupZ06,BaKaMePe05}

Distance sensitivity oracles are data structures that can answer queries about the distances in a given graph in the presence of faults. The first nontrivial construction was given by Demetrescu et al. \cite{demetrescu2008oracles} and later improved by Bernstein and Karger~\cite{BernsteinK09} who showed how to construct $\tilde{O}(n^2)$-space, constant query time oracles for a single edge fault for an $m$-edge $n$-node graph in $\tilde{O}(mn)$ time. The first work that considered the case of two faults (hence making the first jump from one to two) is due to Duan and Pettie in \cite{duan2009dual}. Their distance oracle has nearly optimal size of $\widetilde{O}(n^2)$ and query time of $\widetilde{O}(1)$. The case of bounded edge weights, and possibly multiple faults, is addressed in \cite{weimann2013replacement,GrandoniW12} exploiting fast matrix multiplication techniques. The size of their oracle is super-quadratic.

The notion of FT-preservers is also closely related to the problem of
constructing \emph{replacement paths}. For a pair of vertices $s$ and $t$ and an edge $e$, the replacement path $P_{s,t,e}$ is the $s$-$t$ shortest-path that avoids $e$\footnote{Replacement paths were originally defined for the single edge fault case, but later on extended to the case of multiple faults as well.}. The efficient computation of replacement paths is addressed, among others, in \cite{malik1989k,roditty2012replacement,weimann2013replacement,williams2011faster}. A single-source version of the problem is studied in \cite{GrandoniW12}.
Single-source FT structures that preserve strong connectivity have been studied in \cite{BaswanaCR16}.

\subsection{Preliminaries and Notation}

Assume throughout that all shortest paths ties are broken in a consistent manner. For every $s,t\in V$ and a subgraph $G'\subseteq G$, let $\pi_{G'}(s,t)$ be the (unique) $u$-$v$ shortest path in $G'$ (i.e., it is unique under breaking ties). If there is no path between $s$ and $t$ in $G'$, we define $\pi_{G'}(s,t)=\emptyset$.  When $G'=G$, we simply write $\pi(u,v)$.  For any path $P$ containing nodes $u,v$, let $P[u \leadsto v]$ be the subpath of $P$ between $u$ and $v$. 
For $s,t \in V$ and $F \subseteq E$, we let $P_{s,t,F}=\pi_{G\setminus F}(s,t)$ be the $s$-$t$ shortest-path in $G \setminus F$. We call such paths \emph{replacement paths}. When $F=\{e\}$, we simply write $P_{s,t,e}$.  By $m$ we denote the number of edges in the graph currently being considered.

\medskip
The structure of the paper is as follows. In Sec. \ref{sec:UB}, we describe an efficient construction for FT-preservers and additive spanners with a subquadratic number of edges. Then, in Sec. \ref{sec:lbST}, we provide several lower bound constructions for a single $s$-$t$ pair, both for the exact and for the additive stretch case. Finally, in Sec. \ref{sec:Weighted} we consider the setting of weighted graphs. Most of the results of that setting are deferred to Appendix \ref{sec:apxWeighted}.  Missing proofs in other sections can be found in the appendix as well.

\section{Efficient Construction of FT-Preservers and Spanners}
\label{sec:UB}
In this section we show:
\begin{theorem}
\label{thm:multisourcef}
For every directed or undirected unweighted graph $G=(V,E)$, integer $f \geq 1$ and $S \subseteq V$, 
one can construct in time $O(f \, n \, m)$ an $f$-FT $S$-sourcewise (i.e. $P = S \times V$) preserver of size $\widetilde{O}(f\cdot |S|^{1/2^f} \cdot n^{2-1/2^f})$. 
\end{theorem}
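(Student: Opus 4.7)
The plan is to prove the theorem by induction on $f$, combining a length-threshold/pivoting decomposition with a combinatorial charging argument. At each level one fixes a threshold $L$; short replacement paths (length at most $L$) are added to the preserver directly and bounded via a counting/branching argument, while long replacement paths are broken at a small pivot set $R\subseteq V$ of size $\widetilde O(n/L)$ that hits every simple path of length exceeding $L$, reducing the long-path contribution to a recursive subproblem.

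For the base case $f=1$ I would either invoke or rederive the multi-source single-fault BFS preserver of size $\widetilde O(|S|^{1/2}\cdot n^{3/2})$, constructible in time $O(nm)$: grow a BFS tree from each $s\in S$ and, for every triple $(s,v,e)$, add a canonical consistently tie-broken shortest $s$-to-$v$ path avoiding $e$; a branching/edge-charging argument then controls the total size.

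For the inductive step, set $L=(n/|S|)^{1/2^f}$ and compute, for every triple $(s,v,F)$ with $s\in S$, $v\in V$, $|F|\le f$, a canonical shortest replacement $s$-to-$v$ path $\pi(s,v,F)$ avoiding $F$; add $\pi(s,v,F)$ to the preserver iff its length is at most $L$. A charging argument, built on the tie-breaking convention, bounds the union of these short paths by $\widetilde O(|S|^{1/2^f}\cdot n^{2-1/2^f})$ edges. For the long replacement paths, build a pivot set $R$ of size $\widetilde O(n/L)=\widetilde O(|S|^{1/2^f}n^{1-1/2^f})$ that hits every simple path of length $>L$; any long $\pi(s,v,F)$ passes through some $r\in R$, so the subpaths $s\to r$ and $r\to v$ can be recovered from a preserver of a recursive subproblem involving $R$ in the source set. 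The two terms are balanced by the choice of $L$, and unrolling the recursion produces the claimed $\widetilde O(f\cdot|S|^{1/2^f}\cdot n^{2-1/2^f})$ size. The construction runs in $O(nm)$ time per level (BFS from each source after each single-edge fault), yielding $O(fnm)$ total.

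The main obstacle is the short-path charging argument, which I expect to require a carefully designed tie-breaking invariant ensuring that every new edge entering the preserver through $\pi(s,v,F)$ certifies a distinct $(s,v,F)$ triple. A secondary subtlety is that this invariant---and the accompanying recursive accounting---must survive the augmentation of the source set by the pivot set $R$ at the next level of the recursion, which risks inflating the charging in an uncontrolled way. Verifying that the balance induced by the choice of $L$ is tight and that the recursion does not leak polylog factors beyond the $\widetilde O$ notation across $f$ levels is the second main piece of work.
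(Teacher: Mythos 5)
Your high-level skeleton (short/long split by a length-type threshold, a hitting/pivot set of size $\widetilde O(n/L)$ for the long case, recursion with an inflated source set, and the balancing $|S_{i+1}|\approx\sqrt{n|S_i|}$ that drives the $1/2^f$ exponent) is in the same family as the paper's, but the three things you flag as ``obstacles'' are exactly what the paper's proof has to invent, and your sketch of how to resolve them has genuine gaps.

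First, the short-path charging argument. You propose adding \emph{entire} short replacement paths and charging each new edge to a triple $(s,v,F)$. That does not control the size: there are $\Omega(n^{f+1})$ such triples, so ``each new edge certifies a distinct triple'' gives no nontrivial bound. The paper's resolution is structurally different: it fixes a target $t$, adds only the \emph{last edge} of each relevant path to a local set $E_t$, and charges the remainder of the path to other targets via a separate, one-line induction on path length (Lemma \ref{lem:UBcorrect}). This last-edge-only accounting is what makes the size bound per target of order $|S_f|$ rather than a sum over all short paths; it is not a tie-breaking refinement of your charging scheme but a different unit of accounting. Without it, I do not see how your short-path bound would come out.

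Second, the recursion must actually reduce $f$, and your pivot step does not. Splitting a long $\pi(s,v,F)$ at a pivot $r$ and recovering $s\to r$ and $r\to v$ from a preserver with $R$ added to the sources leaves both subpaths as $f$-fault replacement paths; the induction never terminates in $f$. The paper's mechanism is to delete the BFS tree $T_i$ when passing from round $i$ to $i+1$: every interesting replacement path has at least one fault on $T_i$, so after removing $T_i$ the detour from the divergence (or hit) point to $t$ is a replacement path with $\leq f-1$ faults in $G_{i+1}$, which is what Claims \ref{cl:short} and \ref{cl:long} prove. The hitting set $S^{long}_i$ in the paper is used exactly the way you suggest (to expose a vertex on every long detour as a new source), but only after the $T_i$-removal trick has bought the $f\to f-1$ reduction; your proposal is missing that trick entirely.

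Third, the running time. You propose computing a canonical $\pi(s,v,F)$ for every triple $(s,v,F)$ with $|F|\le f$, which is $\Omega(|S|\,n^{f+1})$ work before even touching the recursion, and your ``BFS from each source after each single-edge fault'' parenthetical is $\Omega(|S|\,m^2)$ already for $f=1$. Neither is $O(fnm)$. The paper's point, stated explicitly in the ``Time Bound Challenge'' paragraph, is that one can construct the preserver \emph{without} enumerating replacement paths at all: the algorithm is $O(n)$ targets times $f+1$ BFS computations per target, each in a carefully chosen subgraph $G_i$, and its correctness rests on the structural claim that every relevant last edge is caught by one of those BFS trees. That indirect certification is the core of the theorem, and the proposal has no substitute for it.
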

We remark that Theorem \ref{thm:multisourcef} holds under both edge and vertex faults. 
We next focus on the directed case, the undirected one being analogous and simpler. We begin by recapping the currently-known approaches for handling many faults, and we explain why these approaches fail to achieve interesting space/construction time bounds for large $f$.


\paragraph{The limits of previous approaches} A known approach for handling many faults is by \emph{random sampling} of subgraphs, as  introduced by Weimann and Yuster \cite{weimann2013replacement} in the setting of distance sensitivity oracles, and later on applied by Dinitz and Kraughgamer \cite{dinitz2011fault} in the setting of fault tolerant spanners. The high level idea is to generate multiple subgraphs $G_1, \ldots, G_r$ by removing each edge/vertex independently with sufficiently large probability $p$; intuitively, each $G_i$ simultaneously captures many possible fault sets of size $f$. One can show that, for a sufficiently small parameter $L$ and for any given (\emph{short}) replacement path $P_{s,t,F}$ of length at most $L$ (avoiding faults $F$), w.h.p. in at least one $G_i$ the path $P_{s,t,F}$ is still present while all edges/vertices in $F$ are deleted. Thus, if we compute a (non-faulty) preserver $H_i \subseteq G_i$ for each $i$, then the graph $H = \bigcup_i H_i$ will contain every short replacement path. For the remaining (\emph{long}) replacement paths, Weimann and Yuster use a random decomposition into short subpaths. Unfortunately, any combination of the parameters $p,r,L$ leads to a quadratic (or larger) space usage. 

Another way to handle multiple faults is by extending the approach in \cite{PPFTBFS13,parter2014fault,parter2014vertex} that works for $f\in \{1,2\}$. A useful trick used in those papers (inspired by prior work in \cite{roditty2012replacement,williams2011faster}) is as follows: suppose $f=1$, and fix a target node $t$. Consider the shortest path $\pi(s,t)$. It is sufficient to take the \emph{last} edge of each replacement path $P_{s,t,e}$ and charge it to the node $t$; the rest of the path is then charged to other nodes by an inductive argument. Hence, one only needs to bound the number of \emph{new-ending} paths -- those that end in an edge that is not already in $\pi(s,t)$.
In the case $f=1$, these new-ending paths have a nice structure: they diverge from $\pi(s,t)$ at some vertex $b$ \fab{(\emph{divergence point})} above the failing \fab{edge/}vertex and collide again with $\pi(s,t)$ only at the terminal $t$; the subpath connecting $b$ and $t$ on the replacement path is called its \emph{detour}.
One can divide the $s$-$t$ replacement paths into two groups: short (resp., long) paths are those whose detour has length at most (resp., at least) $\sqrt{n}$.
It is then straightforward enough to show that each category of path contributes only $\Oish(n^{1/2})$ edges entering $t$, and so (collecting these last edges over all nodes in the graph) the output subgraph has $\widetilde{O}(n^{3/2})$ edges in total.
Generalizing this to the case of multiple faults is non-trivial already for the case of $f=2$. The main obstacle here stems from a lack of structural understanding of replacement paths for multiple faults: in particular, any given divergence point $b \in \pi(s,t)$ can now be associated with many new-ending paths and not only one!
In the only known positive solution for $f=2$ \cite{ParterPODC15}, the approach works only for edge faults and is based on an extensive case analysis whose extension to larger $f$ is beyond reasonable reach.
Thus, in the absence of new structural understanding, further progress seems very difficult.

A second source of difficulties is related to the {\em running time} of the construction. A priori, it seems that constructing a preserver $H$ should require computing all replacement paths $P_{s,t,F}$, which leads to a construction time that scales {\em exponentially} in $f$. In particular, by deciding to omit an edge $e$ from the preserver $H$, we must somehow check that this edge does not appear on \emph{any} of the replacement paths $P_{s,t,F}$ (possibly, without computing these replacement paths explicitly).

\paragraph{Our basic approach} The basic idea behind our algorithm is as follows. Similar to \cite{PPFTBFS13,parter2014fault,parter2014vertex}, we focus on each target node $t$, and define a set $E_t$ of edges incident to $t$ to be added to our preserver. Intuitively, these are the last edges of new-ending paths as described before. The construction of $E_t$, however, deviates substantially from prior work. Let us focus on the simpler case of edge deletions. The set $E_t$ is constructed recursively, according to parameter $f$. Initially we consider the shortest path tree $T$ from the source set $S$ to $t$, and add to $E_t$ the edges of $T$ incident to $t$ (at most $|S|$ many). Consider any new-ending replacement path $P$ for $t$. By the previous discussion, this path has to leave $T$ at some node $b$ and it meets $T$ again only at $t$: let $D$ be the subpath of $P$ between $b$ and $t$ (the \emph{detour} of $P$). Note that $D$ is edge-disjoint from $T$, i.e. it is contained in the graph $G'=G\setminus E(T)$. Therefore, it would be sufficient to compute recursively the set $E'_t$ of final edges of new-ending replacement paths for $t$ in the graph $G'$ with source set $S'$ given by the possible \fab{divergence points} $b$ and w.r.t. $f-1$ faults (recall that one fault must be in $E(T)$, hence we avoid that anyway in $G'$). This set $E'_t$ can then be added to $E_t$.

The problem with this approach is that $S'$ can contain $\Omega(n)$ many divergence points \fab{(hence $E_t$ $\Omega(n)$ many edges)}, leading to a \fab{trivial $\Omega(n^2)$ size} preserver. In order to circumvent this problem, we classify the divergence points $b$ in two categories. Consider first the nodes $b$ at distance at most $L$ from $t$ \fab{along $T$}, for some parameter $L$. There are only $O(|S|L)$ many such nodes $S^{short}$, which is sublinear for $|S|$ and $L$ small enough. Therefore we can safely add $S^{short}$ to $S'$. For the remaining divergence points $b$, we observe that the corresponding detour $D$ must have length at least $L$: therefore by sampling $\tilde{O}(n/L)$ nodes $S^{long}$ we hit all such detours w.h.p. Suppose that $\sigma\in S^{long}$ hits detour $D$. Then the portion of $D$ from $\sigma$ to $t$ also contains the final edge of $D$ to be added to $E_{\fab{t}}$. In other terms, it is sufficient to add $S^{long}$ (which has sublinear size for polynomially large $L$) to $S'$ to cover all the detours of nodes $b$ of the second type. Altogether, in the recursive call we need to handle one \fab{less} fault w.r.t. a larger (but sublinear) set of sources $S'$. Our approach has several benefits:
\begin{itemize}\itemsep0pt
\item It leads to a subquadratic size for any $f$ (for a proper choice of the parameters);
\item It leads to a very fast algorithm. In fact, for each target $t$ we only need to compute a BFS tree in $f$ different graphs, leading to an $O(f n m)$ running time;
\item Our analysis is very simple, much simpler than in \cite{ParterPODC15} for the case $f=2$;
\item It can be easily extended to the case of vertex faults. 
\end{itemize}

\paragraph{Algorithm for Edge Faults} Let us start with the edge faults case. The algorithm constructs a set $E_t$ of edges incident to each target node $t\in V$. The final preserver is simply the union $H= \bigcup_{t\in V} E_t$ of these edges. We next describe the construction of each $E_t$ (see also Alg. \ref{FTalgorithm}). The computation proceeds in rounds $i=0,\ldots, f$. At the beginning of round $i$ we are given a subgraph $G_i$ (with $G_0=G$) and a set of sources $S_i$ (with $S_0=S$). 

We compute a partial  
BFS tree $T_i=\bigcup_{s \in S_i} \pi_{G_i}(s,t)$\footnote{If $\pi_{G_i}(s,t)$ does not exist, recall that we define it as an empty set of edges.} from $S_i$ to $t$, and add to $E_t$ (which is initially empty) the edges $\{\LastE(\pi_{T_i}(s,t)) ~\mid~ s \in S_i\}$ of this tree incident to $t$. Here, for a path $\pi$ where one endpoint is the considered target node $t$, we denote by $\LastE(\pi)$ the edge of $\pi$ incident to $t$.
The source set $S_{i+1}$ is given by $S^{short}_i\cup S^{long}_i$. Here $S^{short}_i=\{ v \in V(T_i) ~\mid~ \dist_{T_i}(v,t) \leq d_i\}$ is the set of nodes at distance at most $d_i =\sqrt{n/|S_i|\cdot f \log n}$ from $t$, while $S^{long}_i$ is a random sample of $\Theta(n/d_i\cdot f \log n)$ vertices. 
The graph $G_{i+1}$ is obtained from $G_i$ be removing the edges $E(T_i)$\footnote{Note that for $f=1$, the algorithm has some similarity to the replacement path computation of \cite{roditty2012replacement}. Yet, there was no prior extension of this idea for $f\geq 2$.}.

\begin{algorithm}[t]
\caption{Construction of $E_t$ in our $f$-FT $S$-Sourcewise Preserver Algorithm.}
\label{FTalgorithm}
\begin{algorithmic}[1]
\Procedure{$\SWFT(t,S,f,G)$} {}
\begin{flushleft}
Input: A graph $G$ with a source set $S$ and terminal $t$, number of faults $f$. \\
Output: Edges $E_t$ incident to $t$ in an $f$-FT $S$-sourcewise preserver $H$.
\end{flushleft}
\\
Set $G_0=G$, $S_0=S$, $E_t=\emptyset$.
\For{$i \in \{0,\ldots, f\}$}
\State Compute the partial BFS tree $T_i=\bigcup_{s \in S_i} \pi_{G_i}(s,t)$.  
\State $E_t = E_t \cup \{\LastE(\pi_{T_i}(s,t)) ~\mid~ s \in S_i\}$.
\State Set distance threshold $d_i =\sqrt{n/|S_i|\cdot f \log n}$.
\State Let $S^{short}_i=\{ v \in V(T_i) ~\mid~ \dist_{T_i}(v,t) \leq d_i\}$. 
\State Sample a collection $S^{long}_i \subseteq V(G_i)$ of $\Theta(n/d_i\cdot f \log n)$ vertices.
\State Set $S_{i+1}=S^{short}_i \cup S^{long}_i$ and $G_{i+1}=G_{i} \setminus E(T_i)$.  
\EndFor
\EndProcedure
\end{algorithmic}
\end{algorithm}

\paragraph{Adaptation for Vertex Faults}
The only change in the algorithm is in the definition of the graph $G_i$ inside the procedure to compute $E_t$. We cannot allow ourselves to remove all the vertices of the tree $T_i$ from $G_i$ and hence a more subtle definition is required.
To define $G_{i+1}$, we first remove from $G_i$: (1) all edges of $S^{short}_i \times S^{short}_i$, (2)  the edges of $E(T_i)$, and (3) the vertices of $V(T_i) \setminus S^{short}_i$. In addition, we orient all remaining edges incident to $S^{short}_i$ to be directed \emph{away} from these vertices (i.e., the incoming degree of the $S^{short}_i$ vertices in $G_{i+1}$ is zero). Finally, we delete all remaining edges incident to $S^{short}_i$ which are directed towards any one of these vertices (i.e., the incoming degree of the $S^{short}_i$ vertices in $G_{i+1}$ is zero).


\paragraph{Analysis}
We now analyze our algorithm.  Since for each vertex $t$, we compute $f$ (partial) BFS trees, we get trivially:
\vspace{-4pt}
\begin{lemma}[Running Time]\label{lem:UPtime}
The subgraph $H$ is computed within $O(f\, n\, m)$ time. 
\end{lemma}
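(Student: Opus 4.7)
The plan is a direct accounting of the work inside the procedure $\SWFT(t,S,f,G)$, followed by multiplying by the $n$ outer calls. For each terminal $t \in V$ the algorithm runs $f+1$ rounds, and the dominant cost of round $i$ is the construction of the partial tree $T_i=\bigcup_{s \in S_i}\pi_{G_i}(s,t)$. Since every one of these shortest paths shares the same endpoint $t$, the tree can be produced by a single reverse BFS from $t$ in $G_i$: compute $\dist_{G_i}(v,t)$ for every $v$, then harvest parent pointers restricted to ancestors of sources in $S_i$. This takes $O(n+m)=O(m)$ time (we can assume $m\geq n/2$; otherwise isolated vertices contribute nothing and can be preprocessed away).

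Next I would verify that the remaining per-round operations fit inside the same $O(m)$ budget. Updating $E_t\mathrel{+}=\{\LastE(\pi_{T_i}(s,t))\mid s\in S_i\}$ costs $O(|S_i|)=O(n)$. Determining $S^{short}_i$ is a single scan over $T_i$ once BFS distances are available. Drawing the random sample $S^{long}_i$ of $\Theta((n/d_i)\cdot f\log n)$ vertices is $O(n)$. Finally, producing $G_{i+1}$ in the edge-fault case is just the deletion of $E(T_i)$, which is linear in $|E(T_i)|$; in the vertex-fault case, the extra bookkeeping (deleting $V(T_i)\setminus S^{short}_i$, removing $S^{short}_i\times S^{short}_i$ edges, and directing the remaining $S^{short}_i$-incident edges so that the in-degree of each $v\in S^{short}_i$ is zero) is still a single pass through the adjacency lists of the affected vertices, hence $O(m)$.

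Summing these contributions across the $f+1$ rounds gives $O(f\cdot m)$ time for one call to $\SWFT$. The outer algorithm invokes the procedure once per terminal $t\in V$ and then outputs $H=\bigcup_{t\in V}E_t$, so the total running time is $O(f\cdot n\cdot m)$, matching the lemma. I do not expect a genuine obstacle here; the one step that merits a sentence of care is the vertex-fault update of $G_{i+1}$, and even there the edits are local to the $T_i$-vertices and implementable in linear time with standard adjacency-list representations.
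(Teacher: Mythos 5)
Your proposal is correct and takes essentially the same approach as the paper, whose proof is just the one-line observation that the algorithm reduces to $O(f)$ (partial) BFS computations per terminal $t$, i.e. $O(fn)$ BFS runs overall at $O(m)$ each. Your version simply makes explicit the standard details — that $T_i$ is built by a single reverse BFS from $t$, and that the bookkeeping for $S_i^{short}$, $S_i^{long}$, and $G_{i+1}$ (in both edge- and vertex-fault variants) fits in the same $O(m)$ budget — which the paper treats as trivial.
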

\vspace{-4pt}
We proceed with bounding the size of $H$.
\vspace{-4pt}
\begin{lemma}[Size Analysis]\label{lem:UPsize}
$|E_t|=\widetilde{O}(|S|^{1/2^f} \cdot (fn)^{1-1/2^f})$ for every $t \in V$, hence $|E(H)|=\widetilde{O}(f|S|^{1/2^f} n^{2-1/2^f})$. 
\end{lemma}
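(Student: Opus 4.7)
The plan is to analyze the sum $|E_t| = \sum_{i=0}^{f} |S_i|$, where the inequality follows since round $i$ of Alg.~\ref{FTalgorithm} adds only the edges $\{\LastE(\pi_{T_i}(s,t)) : s \in S_i\}$, i.e.\ at most one edge per source in $S_i$. It therefore suffices to bound $|S_i|$ inductively via the recurrence $|S_{i+1}| \leq |S_i^{short}| + |S_i^{long}|$.

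For the two terms I argue as follows. The long part is sampled to have size $|S_i^{long}| = \Theta\!\bigl((n/d_i)\, f \log n\bigr)$ by construction. For the short part, I would observe that $T_i = \bigcup_{s \in S_i} \pi_{G_i}(s,t)$ is the union of at most $|S_i|$ source-to-$t$ paths, each of which contributes at most $d_i$ vertices lying within tree-distance $d_i$ of $t$. Hence $|S_i^{short}| \leq |S_i| \cdot d_i$. Substituting the choice $d_i = \sqrt{(n/|S_i|) \cdot f \log n}$ balances both terms at $\widetilde{O}(\sqrt{|S_i|\, n f})$, and the recurrence becomes
\[
|S_{i+1}| \;=\; \widetilde{O}\!\bigl(\sqrt{|S_i|\cdot n f}\bigr), \qquad |S_0| = |S|.
\]

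A straightforward induction on $i$ then gives $|S_i| = \widetilde{O}\!\bigl(|S|^{1/2^i} \cdot (nf)^{1-1/2^i}\bigr)$: the base $i=0$ is immediate, and the inductive step follows because $\sqrt{|S_i|\cdot nf} = \widetilde{O}\bigl(|S|^{1/2^{i+1}} (nf)^{1-1/2^{i+1}}\bigr)$. The sequence $\{|S_i|\}$ is non-decreasing (each step multiplies by $\sqrt{nf/|S_i|} \geq 1$ as long as $|S_i| \leq nf$), so the sum $\sum_{i=0}^{f} |S_i|$ is dominated up to a factor of $f+1$ by its last term $|S_f| = \widetilde{O}(|S|^{1/2^f}\,(nf)^{1-1/2^f})$, absorbed into $\widetilde{O}$. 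This yields the claimed per-vertex bound on $|E_t|$. Summing over $t\in V$, and using $(fn)^{1-1/2^f} = f^{1-1/2^f} n^{1-1/2^f} \leq f\cdot n^{1-1/2^f}$, gives $|E(H)| = \widetilde{O}(f\cdot|S|^{1/2^f}\cdot n^{2-1/2^f})$.

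The only nontrivial step is the inequality $|S_i^{short}| \leq |S_i|\cdot d_i$, which requires viewing $T_i$ not just as an unstructured subgraph but as the union of $|S_i|$ concrete root-to-leaf paths and charging each vertex in the $d_i$-ball around $t$ to one of these paths. Everything else is arithmetic on the recurrence, which unfolds cleanly because the choice of $d_i$ is precisely tuned to balance $|S_i^{short}|$ against $|S_i^{long}|$.
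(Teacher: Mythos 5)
Your proof is correct and follows essentially the same route as the paper's: bound $|E_t|$ by $\sum_{i}|S_i|$, balance $|S_i^{short}|\leq |S_i|d_i$ against $|S_i^{long}|=\Theta((n/d_i)f\log n)$ via the chosen $d_i$, and unroll the resulting recurrence $|S_{i+1}|=\Theta(d_i|S_i|)$ to obtain $|S_i|=\widetilde{O}(|S|^{1/2^i}(nf)^{1-1/2^i})$. The extra remarks you add (charging each vertex near $t$ to one of the $|S_i|$ root-to-leaf paths, and noting $\{|S_i|\}$ is non-decreasing) are fine but do not change the argument.
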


\begin{proof}
Since the number of edges collected at the end each round $i$ is bounded by the number of sources $S_i$, it is sufficient to bound $|S_i|$ for all $i$. Observe that, for every $i  \in \{0,\ldots, f-1\}$, 
$$
|S_{i+1}|\leq |S^{long}_i|+|S^{short}_i| \leq d_i\cdot |S_i|+\Theta(n/d_i\cdot f \log n) =\Theta(d_i\cdot |S_i|).
$$ 
By resolving this recurrence starting with $|S_0|=|S|$ one obtains $|S_i|=O(|S|^{1/2^i}(fn\log n)^{1-1/2^i})$. The claim follows by summing over $i\in \{0,\ldots,f\}$.
\end{proof}

We next show that the algorithm is correct. We focus on the vertex fault case, the edge fault case being similar and simpler.
Let us define, for $t\in V$ and $i \in \{0,\ldots, f\}$,
$$
\mathcal{P}_{t,i}=\{\pi_{G_i \setminus F}(s,t) ~\mid~ s\in S_i, ~F \subseteq V(G_i),~ |F|\leq f-i\}.
$$
\vspace{-20pt}
\begin{lemma}
\label{lem:maininduc}
For every $t\in V$ and $i \in \{0,\ldots, f\}$, it holds that:
$\LastE(\pi) \in E_t \mbox{~~for every ~~} \pi \in  \mathcal{P}_{t,i}.$
\end{lemma}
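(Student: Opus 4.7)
The plan is to prove Lemma~\ref{lem:maininduc} by \emph{reverse} induction on $i$, starting at $i=f$ and descending to $i=0$. The base case $i=f$ is essentially by definition: $|F|\le 0$ forces $F=\emptyset$, so any $\pi\in\mathcal{P}_{t,f}$ equals $\pi_{G_f}(s,t)=\pi_{T_f}(s,t)$ for some $s\in S_f$, and the algorithm explicitly adds $\LastE(\pi_{T_f}(s,t))$ to $E_t$ in round $f$.

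For the inductive step I fix $\pi=\pi_{G_i\setminus F}(s,t)\in\mathcal{P}_{t,i}$. The easy case is $\LastE(\pi)\in E(T_i)$: the algorithm adds $\{\LastE(\pi_{T_i}(s,t)):s\in S_i\}$ to $E_t$ in round $i$, and these already cover every $T_i$-edge incident to $t$. In the remaining \emph{new-ending} case ($\LastE(\pi)\notin T_i$), the key preliminary observation is that $F$ must intersect $T_i$: if not, then $\pi_{T_i}(s,t)=\pi_{G_i}(s,t)$ would still be a shortest $s$-$t$ path in $G_i\setminus F$, and consistent tie-breaking would force $\pi=\pi_{T_i}(s,t)$, contradicting $\LastE(\pi)\notin T_i$. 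Setting $F':=F\setminus E(T_i)$ in the edge case and $F':=F\setminus V(T_i)$ in the vertex case therefore gives $|F'|\le f-i-1$, i.e., one fault has been ``absorbed'' into $T_i$.

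Next I let $b$ be the last vertex of $\pi$ before $t$ that lies on $V(T_i)$, and $\pi_b$ the $b$-to-$t$ suffix; its internal vertices lie outside $V(T_i)$ and its edges outside $E(T_i)$. If $\dist_{T_i}(b,t)\le d_i$, set $s':=b\in S^{short}_i\subseteq S_{i+1}$ and $\pi':=\pi_b$. If instead $\dist_{T_i}(b,t)>d_i$, use $b\in V(T_i)$ and BFS-optimality to conclude $|\pi_b|\ge\dist_{G_i}(b,t)=\dist_{T_i}(b,t)>d_i$, so a Chernoff-and-union-bound argument over the $n^{O(f)}$ candidate replacement paths (justifying the $\Theta(f\log n)$ oversampling factor built into $|S^{long}_i|$) shows w.h.p.\ that some $w\in S^{long}_i$ lies strictly between $b$ and $t$ on $\pi_b$; take $s':=w$ and $\pi':=$ the $w$-to-$t$ suffix. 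In either case $s'\in S_{i+1}$ and $\pi'$ avoids $E(T_i)$, so $\pi'\subseteq G_{i+1}\setminus F'$ in the edge case; furthermore $\pi'$ is a shortest $s'$-$t$ path in $G_{i+1}\setminus F'$, since any alternative $s'$-$t$ path in $G_{i+1}\setminus F'$ automatically avoids $E(T_i)$ and hence $F\cap E(T_i)$, so it avoids all of $F$, and a strictly shorter such path would contradict optimality of $\pi$ in $G_i\setminus F$. Thus $\pi'\in\mathcal{P}_{t,i+1}$ and the induction yields $\LastE(\pi)=\LastE(\pi')\in E_t$.

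The main technical obstacle I anticipate is adapting the containment $\pi'\subseteq G_{i+1}\setminus F'$ to the vertex-fault case, where $G_{i+1}$ is defined by a more delicate three-step construction: remove $E(T_i)$, remove $V(T_i)\setminus S^{short}_i$, remove every edge between two $S^{short}_i$ vertices, and orient every remaining edge incident to $S^{short}_i$ away from it. I would need to verify that the internal vertices of $\pi'$, which live in $V(G_i)\setminus V(T_i)$, survive all three deletions; that no edge of $\pi'$ has both endpoints in $S^{short}_i$, which holds because its internal vertices lie outside $V(T_i)\supseteq S^{short}_i$; and that the orientation rule is consistent with the direction of $\pi'$ at its source $s'\in S^{short}_i\cup(V\setminus V(T_i))$. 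A secondary subtlety is ensuring the Chernoff hitting argument in the long subcase succeeds simultaneously over all targets $t$ and all relevant replacement paths, which is exactly what dictates the precise $|S^{long}_i|$ chosen in the algorithm.
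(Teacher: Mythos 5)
Your proposal follows the paper's proof essentially verbatim: reverse induction from $i=f$, a split into tree-ending versus new-ending paths, the observation that a new-ending path forces $F\cap V(T_i)\neq\emptyset$, the last divergence point $b$, the short/long split at threshold $d_i$, and the reduction of the detour suffix to a member of $\mathcal{P}_{t,i+1}$ via $S^{short}_i$ or the hitting set $S^{long}_i$. The vertex-fault details you flag at the end are exactly what the paper's argument verifies, with the one subtlety you circle but do not quite name being that the orientation of edges away from $S^{short}_i$ is needed not merely to keep $\pi'$ inside $G_{i+1}$ but to guarantee that \emph{no} $s'$-to-$t$ path in $G_{i+1}$ can pass through a surviving failed vertex of $F\cap S^{short}_i$, which is what makes $\pi'=\pi_{G_{i+1}\setminus F'}(s',t)$ and hence a member of $\mathcal{P}_{t,i+1}$.
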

\begin{proof}
We prove the claim by decreasing induction on $i \in \{f, \ldots, 0\}$. For the base of the induction, consider the case of $i=f$. In this case, $\mathcal{P}_{t,f}=\{\pi_{G_f}(s,t) ~\mid~ s \in S_f\}$.  Since we add precisely the last edges of these paths to the set $E_t$, the claim holds.
Assume that the lemma holds for rounds $f, f-1, \ldots, i+1$ and consider round $i$. 
For every $\pi_{G_i \setminus F}(s,t) \in \mathcal{P}_{t,i}$, let $P'_{s,t,F}=\pi_{G_i \setminus F}(s,t)$. \footnote{We denote these replacement paths as $P'_{s,t,F}$ as they are computed in $G_i$ and not in $G$.} 
Consider the partial BFS tree $T_i=\bigcup_{s \in S_i}\pi_{G_i}(s,t)$ rooted at $t$. 
Note that all (interesting) replacement paths $P'_{s,t,F}$ in $G_i$ have at least one failing vertex $v \in F \cap V(T_i)$ as otherwise $P'_{s,t,F}=\pi_{G_i}(s,t)$. 

We next partition the replacement paths $\pi\in\mathcal{P}_{t,i}$ into two types depending on their last edge $\LastE(\pi)$. The first class contains all paths whose last edge is in $T_i$. The second class of replacement paths contains the remaining paths, which end with an edge that is not in $T_i$. We call this second class of paths \emph{new-ending} replacement paths. Observe that the first class is taken care of, since we add all edges incident to $t$ in $T_i$. Hence it remains to prove the lemma for the set of new-ending paths.

For every new-ending path $P'_{s,t,F}$, let $b_{s,t,F}$ be the last vertex on $P'_{s,t,F}$ that is in $V(T_i)\setminus \{t\}$. We call the vertex $b_{s,t,F}$ the \emph{last divergence point} of the new-ending replacement path. Note that the detour $D_{s,t,F}=P'_{s,t,F}[b_{s,t,F}\leadsto t]$ is vertex disjoint with the tree $T_i$ except for the vertices $b_{s,t,F}$ and $t$. From now on, since we only wish to collect last edges, we may restrict our attention to this detour subpath. That is, since $\LastE(D_{s,t,F})=\LastE(P'_{s,t,F})$, it is sufficient to show that $\LastE(D_{s,t,F}) \in E_t$.

Our approach is based on dividing the set of new-ending paths in $\mathcal{P}_{t,i}$ into two classes based on the position of their last divergence point $b_{s,t,F}$ (see Fig. \ref{fig:ftmulti}). The first class $\mathcal{P}_{short}$ consists of new-ending paths in $\mathcal{P}_{t,i}$ whose last divergence point is at distance at most $d_i=\sqrt{n /|S_i| \cdot f\log n}$ from $t$ on $T_i$. In other words, this class contains all new-ending paths whose last divergence point is in the set $S^{short}_i$. 
We now claim the following.
\begin{figure}[h]
\begin{center}
\includegraphics[scale=0.2]{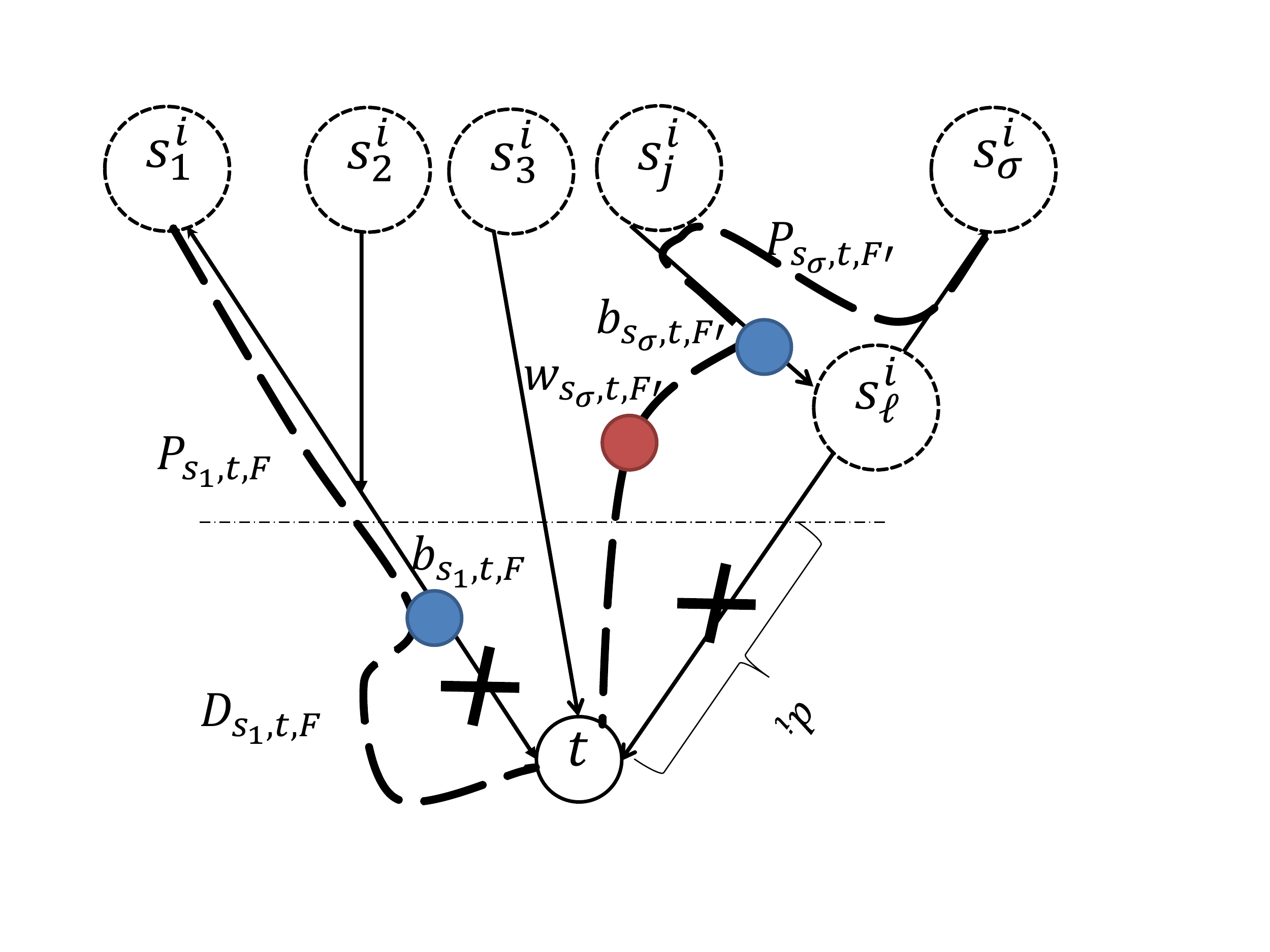}
\caption{
\label{fig:ftmulti} Shown is a partial tree $T_i$ whose leaves set are in $S_i$, all edges are directed towards $t$ in the directed case. (In the figure, we let $s^i_j=s_i$ for simplicity of notation). The replacement paths $P_{s^i_j,t,F}$ are divided into two types depending on their last divergence point $b_{s^i_j,t,F}$. Note that this point is not necessarily on $\pi(s^i_j,t)$ and may appear on other $\pi(s^i_\ell)$ paths. The vertices appearing on the first $d_i$ levels of $T_i$ are $S^{short}_i$. The path $P_{s_1,t,F}$ is in $\mathcal{P}_{short}$ and the path $P_{s_\sigma,t,F'}$ is in $\mathcal{P}_{long}$. The vertex $w_{s_\sigma,t,F'}$ is in the set $S^{long}_i$ and it hits the long detour of $P_{s_\sigma,t,F'}$. Note that since both $P_{s_1,t,F}$ and $P_{s_\sigma,t,F}$ are new-ending, one of the vertices in their failing set $F,F'$ appears on $\pi_{T_i}(b_{s_1,t,F},t),\pi_{T_i}(b_{s_\sigma,t,F'},t)$ respectively.}
\end{center}
\end{figure}
\begin{claim}
\label{cl:short}
For every $P'_{s,t,F} \in \mathcal{P}_{short}$, the detour $D_{s,t,F}$ is in $\mathcal{P}_{t,i+1}$. 
\end{claim}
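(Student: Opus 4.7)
The plan is to exhibit the detour $D = D_{s,t,F}$ as an element of $\mathcal{P}_{t,i+1}$, by identifying a source $b' \in S_{i+1}$ and a fault set $F' \subseteq V(G_{i+1})$ of size at most $f-i-1$ such that $D = \pi_{G_{i+1} \setminus F'}(b',t)$.

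First, I would take $b' := b_{s,t,F}$. Since $P'_{s,t,F} \in \mathcal{P}_{short}$, by definition $b' \in S^{short}_i \subseteq S_{i+1}$, so $b'$ qualifies as a source. Second, I would construct $F'$ using a fault-absorption argument. Because $P'_{s,t,F}$ is new-ending, some fault $v_0 \in F$ must lie in $V(T_i) \setminus \{s,t\}$: otherwise the tree path $\pi_{G_i}(s,t) \subseteq T_i$ would survive in $G_i \setminus F$ and coincide with $P'_{s,t,F}$, whose last edge would then belong to $E(T_i)$, contradicting the new-ending assumption. I then set $F' := (F \setminus \{v_0\}) \cap V(G_{i+1})$, which gives $|F'| \leq |F| - 1 \leq f-i-1$ and $F' \subseteq V(G_{i+1})$ by construction.

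Third, I would verify that $D = \pi_{G_{i+1} \setminus F'}(b',t)$. For membership, by the definition of the last divergence point, the internal vertices of $D$ lie outside $V(T_i)$ and thus are retained in $G_{i+1}$; $D$ avoids $F \supseteq F'$; and the edges of $D$ are compatible with the construction of $G_{i+1}$ (the first edge leaves $b' \in S^{short}_i$ outward in agreement with the orientation rule, the intermediate edges are untouched, and the final edge enters the target $t$). For optimality, any strictly shorter $b'$-$t$ path $D'$ in $G_{i+1} \setminus F'$ cannot traverse $v_0$: either $v_0 \notin V(G_{i+1})$ (if $v_0 \in V(T_i) \setminus S^{short}_i$) or $v_0$ has incoming degree zero by the orientation rule (if $v_0 \in S^{short}_i$). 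Consequently $D'$ avoids all of $F$ and would constitute a strictly shorter $b'$-$t$ path in $G_i \setminus F$ than $D = P'_{s,t,F}[b' \leadsto t]$, contradicting the optimality of $P'_{s,t,F}$ in $G_i \setminus F$.

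The main technical obstacle is the careful bookkeeping around $S^{short}_i$, to ensure simultaneously that $D$ survives intact in $G_{i+1}$ while no would-be shortcut can route through the absorbed fault $v_0$; this is precisely what the orientation rule is designed to guarantee, but it requires treating the $S^{short}_i \cap F$ case and the $(V(T_i) \setminus S^{short}_i) \cap F$ case separately as above. A subsidiary issue is enforcing a consistent shortest-path tie-breaking rule across $G_i \setminus F$ and $G_{i+1} \setminus F'$ so that ``the'' shortest $b'$-$t$ path is uniquely defined and equals $D$; I would assume a canonical ordering on paths used uniformly throughout.
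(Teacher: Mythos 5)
Your proof is correct and takes essentially the same approach as the paper: take $b' = b_{s,t,F} \in S^{short}_i \subseteq S_{i+1}$ as the new source, identify a tree-vertex fault $v_0 \in F \cap V(T_i)$ (which exists because the path is new-ending), drop it to form $F'$ with $|F'| \leq f-i-1$, and verify that $D_{s,t,F}$ survives in $G_{i+1}\setminus F'$ and remains shortest because the orientation/deletion rules around $S^{short}_i$ prevent any shortcut through the absorbed fault. The only cosmetic difference is your choice $F' = (F\setminus\{v_0\})\cap V(G_{i+1})$ versus the paper's $F' = F\setminus V(T_i)$; both are valid since any remaining tree fault in $S^{short}_i$ has in-degree zero in $G_{i+1}$ and hence is unreachable from $b'$ in either case.
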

\begin{proof}
Since $D_{s,t,F}$ is a subpath of the replacement path $P'_{s,t,F}$, $D_{s,t,f}$ is the shortest path between $b_{s,t,F}$ and $t$ in $G_i \setminus F$. Recall that $D_{s,t,F}$ is vertex disjoint with $V(T_i) \setminus \{b_{s,t,F},t\}$. 

Since $b_{s,t,F}$ is the last divergence point of $P'_{s,t,F}$ with $T_i$, the detour $D_{s,t,F}$ starts from a vertex $b_{s,t,F} \in S^{short}_i$ and does not pass through any other vertex in $V(T_i)\setminus \{t\}$. Since we only changed in $G_{i+1}$ the direction of edges incident to $S^{short}_i$ vertices but the outgoing edge connecting $b_{s,t,F}$ to its neighbor $x$ on $D_{s,t,F}[b_{s,t,F} \leadsto t]$ remains (i.e., this vertex $x$ is not in $V(T_i) \setminus \{t\}$), this implies that the detour $D_{s,t,F}$ exists in $G_{i+1}$. In particular, note that the vertex $b_{s,t,F}$ cannot be a neighbor of $t$ in $T_i$. 
If $(b_{s,t,F},t)$ were an edge in $T_i$, then we can replace the portion of the detour path between $b_{s,t,F}$ and $t$ by this edge, getting a contradiction to the fact that $P'_{s,t,F}$ is a new-ending path\footnote{For the edge fault case, the argument is much simpler: by removing $E(T_i)$ from $G_i$, we avoid at least one the failing edges in $G_{i+1}$.}.

Next, observe that at least one of the failing vertices in $F$  occurs on the subpath $\pi_{G_i}[b_{s,t,F},t]$, let this vertex be $v \in F$. Since $v \in S^{short}_i$, all the edges are directed away from $v$ in $G_{i+1}$ and hence the paths going out from the source $b_{s,t,F}$ in $G_{i+1}$ cannot pass through $v$. Letting $F'=F \setminus V(T_i)$, it holds that (1) $|F'| \leq f-i-1$ and (2) since the shortest path ties are decided in a consistent manner and by definition of $G_{i+1}$, it holds that $D_{s,t,F}=\pi_{G_{i+1} \setminus F'}(b_{s,t,F}, t)$. As $b_{s,t,F} \in S^{short}_i$, it holds that  $D_{s,t,F} \in \mathcal{P}_{t,i+1}$.
\end{proof}
Hence by the inductive hypothesis for $i+1$, $\LastE(P'_{s,t,F})$ is in $E_t$ for every $P'_{s,t,F} \in \mathcal{P}_{short}$. 
We now turn to consider the second class of paths 
$\mathcal{P}_{long}$ which contains all remaining new-ending paths; i.e., those paths whose last divergence point is at distance at least $d_i$ from $t$ on $T_i$. Note that the detour $D_{s,t,F}=P'_{s,t,F}[b_{s,t,F} \leadsto t]$ of these paths is long -- i.e., \fab{its} length is at least $d_i$. For convenience, we will consider the internal part $D'_{s,t,F}=D_{s,t,F} \setminus \{b_{s,t,F},t\}$ of these detours, so that the first and last vertices of these detours are not on $T_i$.

We exploit the lengths of these detours $D'_{s,t,F}$ and claim that 
 w.h.p, the set $S^{long}_i$ is a hitting set for these detours. This indeed holds by simple 
union bound overall possible $O(n^{f+2})$ detours. For every $P'_{s,t,F} \in \mathcal{P}_{long}$, let $w_{s,t,F} \in V(D'_{s,t,F})\cap S^{long}_i$. (By the hitting set property, w.h.p., $w_{s,t,F}$ is well defined for each long detour).
Let $W_{s,t,F}=P'_{s,t,F}[w_{s,t,F}, t]$ be the suffix of the path $P'_{s,t,F}$ starting at a vertex from the hitting set $w_{s,t,F} \in S^{long}_i$. Since $\LastE(P'_{s,t,F})=\LastE(W_{s,t,F})$, it is sufficient to show that $\LastE(W_{s,t,F})$ is in $E_t$. 
\begin{claim}
\label{cl:long} 
For every $P'_{s,t,F} \in \mathcal{P}_{long}$, it holds that $W_{s,t,F} \in \mathcal{P}_{t,i+1}$. 
\end{claim}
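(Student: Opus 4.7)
The plan is to prove the lemma by reverse induction on $i$, running from $i=f$ down to $i=0$. The base case $i=f$ is immediate from step~5 of Alg.~\ref{FTalgorithm}: by definition $\mathcal{P}_{t,f} = \{\pi_{G_f}(s,t) : s \in S_f\}$, and the algorithm inserts each such last edge into $E_t$ directly.

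For the inductive step, I would fix $\pi = P'_{s,t,F} \in \mathcal{P}_{t,i}$ and split according to whether $\LastE(\pi) \in E(T_i)$. Paths whose last edge is in $T_i$ are handled trivially, since step~4 inserts $\LastE(\pi_{T_i}(s,t))$ for every $s \in S_i$. The substantive case is the \emph{new-ending} paths; for these, let $b$ be the last vertex of $\pi$ lying on $V(T_i) \setminus \{t\}$ and let $D = \pi[b \leadsto t]$ be the associated detour. Two structural facts drive the rest: first, $D$ is a shortest $b$-$t$ path in $G_i \setminus F$, vertex-disjoint from $T_i$ except at its endpoints; second, since $T_i$ already provides a $b$-$t$ route, at least one fault vertex of $F$ must lie on the tree subpath between $b$ and $t$ (otherwise $\pi$ would simply follow $T_i$ past $b$ and end with a tree edge, contradicting new-endingness).

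Next I would partition new-ending paths by $\dist_{T_i}(b,t)$. For the \emph{short} class ($\le d_i$) we have $b \in S^{short}_i \subseteq S_{i+1}$ directly. For the \emph{long} class ($> d_i$) the interior of $D$ has length at least $d_i$, so the random sample $S^{long}_i$ of size $\widetilde{O}(n/d_i)$ hits each such detour w.h.p.\ by a union bound over the $O(n^{f+2})$ candidate detours; this produces a hitting vertex $w \in S_{i+1}$ for which the suffix $W = \pi[w \leadsto t]$ has the same last edge as $\pi$. In either case the goal reduces to showing that the chosen suffix ($D$ or $W$) is itself an element of $\mathcal{P}_{t,i+1}$, after which the inductive hypothesis at round $i+1$ finishes the argument.

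The main obstacle is verifying this $\mathcal{P}_{t,i+1}$-membership. I would take $F' = F \setminus V(T_i)$, which satisfies $|F'| \le f-i-1$ by the structural observation above. The delicate point is checking that the suffix really equals $\pi_{G_{i+1} \setminus F'}(b,t)$ (resp.\ $\pi_{G_{i+1} \setminus F'}(w,t)$) despite the surgery that produces $G_{i+1}$: deleting $V(T_i) \setminus S^{short}_i$, deleting $E(T_i)$ and $S^{short}_i \times S^{short}_i$ edges, and orienting remaining $S^{short}_i$-incident edges outward. The detour survives because its first edge at $b$ leaves $V(T_i) \setminus \{t\}$ (so the outward-orientation is consistent), its internal vertices avoid $V(T_i)$, and $b$ cannot be a $T_i$-neighbor of $t$ without contradicting new-endingness. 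Conversely, the orientation trick prevents any alternate $b$-$t$ route from re-entering and exiting a deleted tree-fault in $S^{short}_i$, which is what allows us to ``absorb'' the tree-faults into the graph deletion and shrink $F$ to $F'$. With a consistent shortest-path tie-breaking rule, this identifies the suffix as the desired element of $\mathcal{P}_{t,i+1}$ and closes the induction.
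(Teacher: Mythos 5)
Your proposal is correct and follows the same route as the paper: define $F' = F\setminus V(T_i)$, note $|F'| \le f-i-1$ because at least one fault lies on $T_i$, observe the suffix survives the surgery producing $G_{i+1}$, and conclude membership in $\mathcal{P}_{t,i+1}$ from $w_{s,t,F}\in S^{long}_i \subseteq S_{i+1}$. One small remark: for the \emph{long} case (the actual content of Claim~\ref{cl:long}) the surgery-survival argument is considerably simpler than you present it, because $w_{s,t,F}$ lies strictly inside $D'_{s,t,F}$ and hence is \emph{not} in $V(T_i)$ at all; the entire suffix $W_{s,t,F}$ (except its endpoint $t$) avoids $V(T_i)$, so the orientation trick, the ``$b$ is not a $T_i$-neighbor of $t$'' observation, and the first-edge-at-$b$ reasoning are irrelevant here -- those subtleties belong to Claim~\ref{cl:short}, where the detour starts at $b\in S^{short}_i\subseteq V(T_i)$. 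Your sketch lumps both cases into one discussion, which is not wrong but obscures that the long case needs only the vertex-disjointness of $D'$ from $T_i$ (plus the tie-breaking convention to get equality $W_{s,t,F}=\pi_{G_{i+1}\setminus F'}(w_{s,t,F},t)$).
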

\begin{proof}
Clearly, $W_{s,t,f}$ is the shortest path between $w_{s,t,F}$ and $t$ in $G_i \setminus F$. Since $W_{s,t,F} \subseteq D'_{s,t,F}$ is vertex disjoint with $V(T_i)$, it holds that $W_{s,t,F}=\pi_{G_{i+1} \setminus F'}(w_{s,t,F}, t)$ for $F'=F \setminus V(T_i)$. Note that since at least one fault occurred on $T_i$, we have that $|F'|\leq f-i-1$. 
As $w_{s,t,F} \in S^{long}_i$, it holds that 
$W_{s,t,F} \in \mathcal{P}_{t,i+1}$. The lemma follows.
\end{proof}
By applying the claim for $i=0$, we get that $\LastE(P'_{s,t,F})$ is in $E_t$ as required for every $P'_{s,t,F} \in \mathcal{P}_{long}$. This completes the proof.
\end{proof}
\begin{lemma} (Correctness)\label{lem:UBcorrect}
$H$ is an $f$-FT $S$-sourcewise preserver.
\end{lemma}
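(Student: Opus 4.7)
My plan is to derive this as a near-immediate consequence of Lemma \ref{lem:maininduc} via a simple induction on the length of a replacement path.

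First I would instantiate Lemma \ref{lem:maininduc} at the base layer $i = 0$. Since $S_0 = S$ and $G_0 = G$, the collection $\mathcal{P}_{t,0}$ is exactly $\{\pi_{G \setminus F}(s,t) \mid s \in S,\ F \subseteq V,\ |F| \leq f\}$, so the lemma yields $\LastE(\pi_{G \setminus F}(s,t)) \in E_t \subseteq H$ for every source $s \in S$, target $t \in V$, and fault set $F$ with $|F| \leq f$. This already captures all terminal edges of all replacement paths we need to preserve.

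Next I would peel off last edges inductively. Fix $s,t,F$ and write $\pi = \pi_{G \setminus F}(s,t)$; let $(u,t)$ be its last edge, which we just showed lies in $H$. Under a consistent canonical shortest-path convention, the prefix $\pi[s \leadsto u]$ coincides with $\pi_{G \setminus F}(s,u)$, which is a strictly shorter replacement path from a source in $S$ to a vertex in $V$ avoiding at most $f$ faults. By the induction hypothesis applied to this strictly shorter path (with target $u$ replacing $t$), all of its edges lie in $H$. Combining with $(u,t) \in H$, the entire path $\pi$ lies in $H \setminus F$, which is precisely what it means for $H$ to be an $f$-FT $S$-sourcewise preserver.

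The only genuinely delicate point, and the step I would be most careful to state explicitly, is the consistency of the canonical shortest-path selector $\pi_{G'}(\cdot,\cdot)$: every prefix of a canonical shortest path must itself be the canonical shortest path between its endpoints. I would dispatch this by fixing up front a tie-breaking rule that satisfies optimal substructure (for instance, lexicographically smallest among all shortest paths under a fixed vertex ordering) and noting that both Lemma \ref{lem:maininduc} and the inductive step above implicitly use the same rule. With this convention in place the induction goes through with no additional work, and Lemma \ref{lem:UBcorrect} follows.
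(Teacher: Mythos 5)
Your proof is correct and follows essentially the same route as the paper's: apply Lemma~\ref{lem:maininduc} at $i=0$ to put every last edge in $H$, then induct on path length using the consistency of the shortest-path tie-breaking to argue that $P_{s,t,F} = P_{s,u,F} \circ (u,t)$. The only (welcome) difference is that you spell out the prefix-consistency requirement on the canonical path selector, which the paper leaves implicit with the phrase ``since we break ties in a consistent manner.''
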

\begin{proof}
By using Lemma \ref{lem:maininduc} with $i=0$, we get that for every $t \in V$, $s \in S$ and $F \subseteq V$, $|F|\leq f$, $\LastE(P_{s,t,F}) \in E_t$ (and hence also $\LastE(P_{s,t,F}) \in H$).
It remains to show that taking the last edge of each replacement path $P_{s,t,F}$ is sufficient. The base case is for paths of length $1$, where we have clearly kept the entire path in our preserver. Then, assuming the hypothesis holds for paths up to length $k-1$, consider a path $P_{s,t,F}$ of length $k$. Let $\LastE(P_{s,t,F})=(u,t)$. Then since we break ties in a consistent manner, $P_{s,t,F}=P_{s,u,F} \circ \LastE(P_{s,t,e})$. By the inductive hypothesis $P_{s,u,F}$ is in $H$, and since we included the last edge, $P_{s,t,F}$ is also in $H$. The claim follows.
\end{proof}
Theorem \ref{thm:multisourcef} now immediately follows from Lemmas \ref{lem:UPtime}, \ref{lem:UPsize}, and \ref{lem:UBcorrect}. Combing our $f$-FT sourcewise preserver from Theorem \ref{thm:multisourcef} with standard techniques (see, e.g. \cite{parter2014vertex}), we show:
\begin{theorem}
\label{thr:additivespanner}
For every undirected unweighted graph $G=(V,E)$
and integer $f\geq 1$, there exists a randomized $\Oish(fnm)$-time construction of a $+2$-additive $f$-FT spanner of $G$ of size $\widetilde{O}(f \cdot n^{2-1/(2^{f}+1)})$ that succeeds w.h.p.\footnote{The term w.h.p. (with high probability) here indicates a probability exceeding $1-{1}/{n^c}$, for an arbitrary constant $c\geq 2$. Since randomization is only used to select hitting sets, the algorithm can be derandomized; details will be given in the journal version.}.
\end{theorem}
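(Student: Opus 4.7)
The plan is to combine the $f$-FT sourcewise preserver of Theorem \ref{thm:multisourcef} with the standard hitting-set framework for $+2$ additive spanners, adapted to the fault-tolerant setting as in \cite{parter2014vertex}. Fix a degree threshold $\Delta=n^{1-1/(2^{f}+1)}$ and partition $V$ into low-degree vertices $V_L=\{v:\deg_G(v)\leq \Delta\}$ and high-degree vertices $V_H=V\setminus V_L$. Sample each vertex independently with probability $p=\Theta(f\log n/\Delta)$ to form a source set $S$; a Chernoff and union bound argument shows that w.h.p.\ $|S|=\widetilde{O}(fn/\Delta)=\widetilde{O}(fn^{1/(2^{f}+1)})$ and, crucially, $|N_G(w)\cap S|\geq f+1$ for every $w\in V_H$. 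The spanner is then $H=H_1\cup H_2$, where $H_1$ is the set of all edges with at least one endpoint in $V_L$, and $H_2$ is the $f$-FT $S$-sourcewise preserver returned by Theorem \ref{thm:multisourcef}.

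The size and running time fall out of the parameter choice: $|H_1|\leq n\Delta = n^{2-1/(2^{f}+1)}$, while Theorem \ref{thm:multisourcef} gives $|H_2|=\widetilde{O}(f\cdot |S|^{1/2^{f}}\cdot n^{2-1/2^{f}})$; substituting $|S|$ and simplifying the exponent of $n$ makes the two bounds agree up to polylog factors at $\widetilde{O}(f\cdot n^{2-1/(2^{f}+1)})$. The $\Oish(fnm)$ running time is dominated by that of Theorem \ref{thm:multisourcef}, since the sampling and the collection of low-degree edges take only linear time.

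For correctness, fix $u,v\in V$ and a fault set $F$ with $|F|\leq f$, and let $P$ be a shortest $u$-$v$ path in $G\setminus F$. If every vertex on $P$ lies in $V_L$, then each edge of $P$ has a low-degree endpoint and thus belongs to $H_1\setminus F$, so $P$ itself survives in $H\setminus F$. Otherwise, pick any $w\in V_H$ on $P$; by the hitting property and $|F|\leq f$, some $s\in N_G(w)\cap S$ avoids $F$ (and, in the edge-fault case, has its edge $(w,s)$ avoiding $F$), so $d_{G\setminus F}(s,w)\leq 1$. Using the sourcewise guarantee $d_{H_2\setminus F}(s,x)=d_{G\setminus F}(s,x)$ for every $x\in V$, routing $u$ to $v$ through $s$ and then through $w$ on $P$ gives
\[
d_{H\setminus F}(u,v)\leq d_{H_2\setminus F}(u,s)+d_{H_2\setminus F}(s,v)\leq \bigl(d_{G\setminus F}(u,w)+1\bigr)+\bigl(d_{G\setminus F}(w,v)+1\bigr)=d_{G\setminus F}(u,v)+2.
\]
The main delicate point is the fault-tolerant strengthening of the hitting-set step: the classical $+2$ analysis requires only $|N_G(w)\cap S|\geq 1$, whereas here the neighbor of $w$ inside $S$ must survive every adversarial $F$ of size $\leq f$, which forces the $\Theta(f\log n)$-factor oversampling above. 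Beyond this bookkeeping, the argument is a clean composition of Theorem \ref{thm:multisourcef} with the standard template, and no further obstacles are expected.
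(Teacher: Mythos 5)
Your proposal is correct and follows the same standard high-degree/low-degree decomposition the paper invokes (via the reference to \cite{parter2014vertex}): sample a dominating set $S$ with $\Theta(f\log n)$ oversampling so every high-degree vertex retains at least $f{+}1$ neighbors in $S$, include all edges touching a low-degree vertex, and union in the $f$-FT $S$-sourcewise preserver of Theorem \ref{thm:multisourcef}; the degree threshold $\Delta=n^{1-1/(2^f+1)}$ balances $n\Delta$ against $f|S|^{1/2^f}n^{2-1/2^f}$ exactly as required, and the $+2$ stretch argument (route through a surviving sampled neighbor of a high-degree vertex on the replacement path) is the right fault-tolerant strengthening.
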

\begin{proof} 
The spanner construction works as follows. Let $L$ be an integer parameter to be fixed later. A vertex $u$ is \emph{low-degree} if it has degree less than $L$, otherwise it is \emph{high-degree}. 
Let $S$ be a random sample of $\Theta(\frac{n}{L}\cdot f \log n)$ vertices. Our spanner $H$ consists of the $f$-VFT $S$-sourcewise preserver from Theorem \ref{thm:multisourcef} plus all the edges incident to low-degree vertices. We now analyze the construction.

The size of $H$ is bounded by:
$$
\widetilde{O}\left(f|S|^{1/2^f}\cdot n^{2-1/2^f}\right)+O(n L)=\widetilde{O}\left(f^{1+1/2^f} L^{-1/2^f}\cdot n^{2}+ nL\right)
$$
The claim on the size follows by choosing $L=\left\lceil f n^{2^f/(2^f+1)}\right\rceil$.

Next, we turn to show correctness. First note that w.h.p every high-degree vertex has at least $f+1$ neighbors in $S$. Consider any pair of vertices $u,t$ and a set of failing vertices $F$ and let $P_{u,t,F}$ be the $u-t$ shortest path in $G\setminus F$. Let $x$ be the last vertex (closest to $t$) incident to a missing edge $e \in P_{u,t,F}\setminus E(H)$. Hence $x$ is a high-degree vertex. We observe that, w.h.p., $x$ is adjacent to at least $f+1$ vertices in $S$. Since at most $f$ vertices fail, one of the neighbors of $x$ in $S$ ,say, $s'$ survives. Let $\pi_{H\setminus F}(u, s')$ be the $u-s'$ shortest path in $H \setminus F$, and consider the following $u-t$ path $P'=\pi_{H\setminus F}(u, s')\cdot (s',x) \cdot P_{u,t,F}[x \leadsto t]$. By the definition of $x$, $P' \subseteq H$. In addition, since $H$ contains an $f$-FT $S$-sourcewise preserver and $s'\in S$, it holds that 
\begin{eqnarray*}
\dist_{H \setminus F}(u, t)&\leq& |P'|=\dist_{G \setminus F}(u,s')+1+|P_{u,t,F}[x \leadsto t]|
\\&\leq&
\dist_{G \setminus F}(u,x)+2+P_{u,t,f}[x \leadsto t]=|P_{u,t,F}|+2=\dist_{G \setminus F}(u, t) +2.
\end{eqnarray*}
The lemma follows.
\end{proof}




\section{Lower Bounds for FT Preservers and Additive Spanners}
\label{sec:lbST}
In this section, we provide the first non-trivial lower bounds for preservers and additive spanners for a single pair $s$-$t$. 
We start by proving the following theorem. 
  \begin{theorem}
\label{thr:LBadditive}
For any two integers $q,h>0$ and a sufficiently large $n$, there exists an unweighted undirected $n$-node graph $G = (V, E)$ and a pair $s,t\in V$ such that any $2hq$-FT $(2q-1)$-additive spanner for $G$ for the single pair $(s, t)$ has size $\Omega((\frac{n}{hq})^{2-2/(h+1)})$.
\end{theorem}

The main building block in our lower bound is the construction of an (undirected unweighted) tree $\tree^h$, where $h$ is a positive integer parameter related to the desired number of faults $f$. Tree $\tree^h$ is taken from \cite{ParterPODC15} with mild technical adaptations. Let $d$ be a \emph{size} parameter which is used to obtain the desired number $n$ of nodes. 
It is convenient to interpret this tree as rooted at a specific node (though edges in this construction are undirected). We next let $rt(\tree^h)$ and $L(\tree^h)$ be the root and leaf set of $\tree^h$, respectively. We also let $\ell(h)$ and $n(h)$ be the height and number of nodes of $\tree^h$, respectively.

Tree $\tree^h$ is constructed recursively as follows (see also Fig. \ref{fig:Th}). The base case is given by $\tree^0$ which consists of a single isolated root node $rt(\tree^0)$. Note that $\ell(0)=0$ and $n(0)=1$. In order to construct $\tree^h$, we first create $d$ copies $\tree^{h-1}_0,\ldots, \tree^{h-1}_{d-1}$ of $\tree^{h-1}$. Then we add a path $v_0,\ldots,v_{d-1}$ of length $d-1$ (consisting of new nodes), and choose $rt(\tree^h)=v_0$. Finally, we connect $v_j$ to $rt(\tree^{h-1}_j)$ with a path (whose internal nodes are new) of length $(d-j)\cdot(\ell(h-1)+3)$. Next lemma illustrates the crucial properties of $\tree^h$.
\begin{lemma}\label{lem:propertiesT}
The tree $\tree^h$ satisfies the following properties:
\begin{enumerate}\itemsep0pt
\item\label{prop:numberNodes} $n(h)\leq \frac{3}{2}(h+1)(d+1)^{h+1}$
\item\label{prop:numberLeaves} $|L(\tree^h)|=d^h$
\item\label{prop:faultSet} For every $\ell \in L(\tree^h)$, there exists $F_\ell\subseteq E(\tree)$,  $|F_\ell|=h$, such that 
$\dist_{\tree^h\setminus F_\ell}(s, \ell) \leq \dist_{\tree^h\setminus F_\ell}(s, \ell')+2$ for every $\ell' \in L(\tree^h)\setminus \{\ell'\}$.
\end{enumerate}
\end{lemma}

We next construct a graph $S^h$ as follows. We create two copies $\tree_s$ and $\tree_t$ of $\tree^h$. We add to $S^h$ the complete bipartite graph with sides $L(\tree_s)$ and $L(\tree_t)$, which we will call the \emph{bipartite core} $B$ of $S^h$. Observe that $|L(\tree_s)|=|L(\tree_t)|=d^h$, and hence $B$ contains $d^{2h}$ edges.
We will call $s=sr(S^h)=rt(\tree_s)$ the source of $S^h$, and $t=tg(S^h)=rt(\tree_t)$ its target. See Fig. \ref{fig:Sh} for an illustration.

\begin{lemma}\label{lem:Sh}
Every $2h$-FT $(s,t)$ preserver (and 1-additive \fab{$(s,t)$} spanner) $H$ for $S^{h}$ must contain each edge $e=(\ell_s,\ell_t)\in B$.
\end{lemma}
\begin{proof}
Assume that $e=(\ell_s,\ell_t) \notin H$ and consider the case where $F_{\ell_s}$ fails in $\tree_s$ and $F_{\ell_t}$ fails in $\tree_t$. Let $G':=\fab{S^h} \setminus (F_{\ell_s} \cup F_{\ell_t})$, and $d_s$ (resp., $d_t$) be the distance from $s$ to $\ell_s$ (resp., from $\ell_t$ to $t$) in $G'$. By Lemma \ref{lem:propertiesT}\fab{.\ref{prop:faultSet}} the shortest $s$-$t$ path in $G'$ passes through $e$ and has length $d_s+1+d_t$. By the same lemma, any path in $G'$, hence in $H':=H \setminus (F_{\ell_s} \cup F_{\ell_t})$, that does not pass through $\ell_s$ (resp., $\ell_t$) must have length at least $(d_s+2)+1+d_t$ (resp., $d_s+1+(d_t+2)$). On the other hand, any path in $H'$ that passes through $\ell_s$ and $\ell_t$ must use at least $3$ edges of $B$, hence having length at least $d_s+3+d_t$.
 \end{proof}

Our lower bound graph $S^h_q$ (see also Fig. \ref{fig:Shq}) is obtained by taking $q$ copies $S_1,\ldots,S_q$ of graph $S^h$ with $d=(\frac{n}{3q(h+1)}\fab{-1})^{\frac{1}{h+1}}$, and chaining them with edges $(tg(S_i),sr(S_{i+1}))$, for $i=1,\ldots,q-1$. We let $s=sr(S_1)$ and $t=tg(S_q)$.
\begin{proof}[Proof of Theorem \ref{thr:LBadditive}]
Consider $S^h_q$. By Lemma \ref{lem:propertiesT}\fab{.\ref{prop:numberNodes}-\ref{prop:numberLeaves}} this graph contains at most $n$ nodes, and the bipartite core of each $S_i$ contains $d^{2h}=\Omega((\frac{n}{qh})^{2-2/(h+1)})$ edges. 

Finally, we show that any $(2q-1)$-additive $(s,t)$ spanner needs to contain all the edges of at least one such bipartite core. Let us assume this does not happen, and let $e_i$ be a missing edge in the bipartite core of $S_i$ for each $i$. Observe that each $s$-$t$ shortest path has to cross $sr(S_i)$ and $tg(S_i)$ for all $i$. Therefore, it is sufficient to choose $2h$ faulty edges corresponding to each $e_i$ as in Lemma \ref{lem:Sh}. This introduces an additive stretch of $2$ in the distance between $s$ and $t$ for each $e_i$, leading to a total additive stretch \fab{of} at least $2q$. 
\end{proof}

The same construction can also be extended to the setting of $(2h)$-FT $S\times T$ preservers. To do that, we make \emph{parallel} copies of the $S^{h}$ graph. Details are given in Appendix \ref{sec:LB_ST}.

\paragraph{Improving over the Bipartite Core} The proof above only gives the trivial lower bound of $\Omega(n)$ for the case of two faults (using $h=q=1$).
We can strengthen the proof in this special case to show instead that $\Omega(n^{1 + \eps})$ edges are needed, and indeed this even holds in the presence of a \emph{polynomial additive stretch}:
\begin{theorem} \label{thm:subs-lb-improvement}
A $2$-FT distance preserver of a single $(s, t)$ pair in an undirected unweighted graph needs $\Omega(n^{11/10 - o(1)})$ edges.
\end{theorem}
\begin{theorem} \label{thm:gap-pair-lb}
There are absolute constants $\eps, \delta > 0$ such that any $+n^{\delta}$-additive $2$-FT preserver for a single $(s, t)$ pair in an undirected unweighted graph needs $\Omega(n^{1 + \eps})$ edges.
\end{theorem}

Finally, by tolerating one additional fault, we can obtain a strong incompressibility result:
\begin{theorem} \label{thm:gap-compression-lb}
There are absolute constants $\eps, \delta > 0$ such that any $+n^{\delta}$-additive $3$-FT distance sensitivity oracle for a single $(s, t)$ pair in an undirected unweighted graph uses $\Omega(n^{1 + \eps})$ bits of space.
\end{theorem}

The proofs of Theorems \ref{thm:subs-lb-improvement}, \ref{thm:gap-pair-lb} and \ref{thm:gap-compression-lb} are all given in Appendix \ref{sec:apxLB}.
The central technique in their proofs, however, is the same.
The key observation is that the structure of $\tree_s, \tree_t$ allows us to use our faults to select leaves $\ell_{\fab{s}}, \ell_t$ and enforce that a shortest $\ell_{\fab{s}}-\ell_t$ path is kept in the graph.
When we use a bipartite core between the leaves of $\tree_s$ and $\tree_t$, this ``shortest path'' is simply an edge, so the quality of our lower bound is equal to the product of the leaves in $\tree_s$ and $\tree_t$. However, sometimes a better graph can be used instead.
In the case $h=1$, we can use a nontrivial lower bound graph against (non-faulty) subset distance preservers (from \cite{B17}), which improves the cost per leaf pair from $1$ edge to roughly $n^{11/10}$ edges, yielding Theorem \ref{thm:subs-lb-improvement}.
Alternatively, we can use a nontrivial lower bound graph against $+n^{\delta}$ spanners (from \cite{AmirGreg}), which implies Theorem \ref{thm:gap-pair-lb}.
The proof of \fab{Theorem} \ref{thm:gap-compression-lb} is similar in spirit, but requires an additional trick in which \emph{unbalanced trees} are used: we take $\tree_s$ as a copy of $\tree^1$ and $\tree_t$ as a copy of $\tree^2$, and this improved number of leaf-pairs is enough to push the incompressibility argument through.

\section{FT Pairwise Preservers for Weighted Graphs}
\label{sec:Weighted}
We now turn to consider weighted graphs, for which the space requirements for FT $(s,t)$ preservers are considerably larger. 
\begin{theorem} \label{thm:uwstub}
For any undirected weighted graph $G$ and pair of nodes $(s, t)$, there is a $1$-FT $(s,t)$ preserver with $O(n)$ edges.
\end{theorem}
To prove Thm. \ref{thm:uwstub}, we first need:
\begin{lemma} \label{lem:weightedpathdecomp}
In an undirected weighted graph $G$, for any replacement path $P_{s, t, e}$ protecting against a single edge fault, there is an edge $(x, y) \in P_{s, t, e}$ such that there is no shortest path from $s$ to $x$ in $G$ that includes $e$, and there is no shortest path from $t$ to $y$ in $G$ that includes $e$.
\end{lemma}
\begin{proof}
Let $x$ be the furthest node from $s$ in $P_{s, t, e}$ such that there is no shortest path from $s$ to $x$ in $G$ that includes $e$.  Note that if $x = t$ then there is no path from $s$ to $t$ that uses $e$ and so the claim holds trivially.  We can therefore assume $x \ne t$, and define: let $y$ be the node immediately following $x$ in $P_{s, t, e}$.  It must then be the case that there is a shortest path from $s$ to $y$ that includes $e$.

Let $e =: (u, v)$, with $\dist(s, u) < \dist(s, v)$.  The shortest path from $s$ to $y$ that uses $e$ must then intersect $u$ before $v$, so we have
$\dist(u, y) > \dist(v, y).$
Thus, any shortest path in $G$ beginning at $y$ that uses $(u, v)$ must intersect $v$ before $u$.  However, we have $\dist(u, t) > \dist(v, t)$.  Therefore, any shortest path ending at $t$ that uses $(u, v)$ must intersect $u$ before $v$.  It follows that any shortest path beginning at $y$ and ending at $t$ does not use $(u, v)$.
\end{proof}
We can now prove:
\begin{proof} [Proof of Theorem \ref{thm:uwstub}]
To construct the preserver, simply add shortest path trees rooted at $s$ and $t$ to the preserver.
If the edge fault $e$ does not lie on the included shortest path from $s$ to $t$, then the structure is trivially a preserver.  Thus, we may assume that $e$ is in $\pi(s, t)$.  We now claim that, for some valid replacement path $P_{s, t, e}$ protecting against the fault $e$, all but one (or all) of the edges of $P_{s, t, e}$ are in the preserver.  To see this, we invoke Lemma \ref{lem:weightedpathdecomp}: there is an edge $(x, y)$ in $P_{s, t, e}$ such that no shortest path from $s$ to $x$ and no shortest path from $t$ to $y$ in $G$ $P_{s, t, e}[s \leadsto x]$ uses $e$.  Therefore, our shortest path trees rooted at $s$ and $t$ include a shortest path from $s$ to $x$ and from $t$ to $y$, and these paths were unaffected by the failure of $e$.  Therefore, $P_{s, t, e}$ has all edges in the preserver, except possibly for $(x, y)$. There are at most $n$ edges on $\pi(s, t)$, so there are at most $n$ edge faults for which we need to include a replacement path in our preserver.  We can thus complete the preserver by adding the single missing edge for each replacement path, and this costs at most $n$ edges.
If the edge fault $e$ does not lie on the included shortest path from $s$ to $t$, then the structure is trivially a preserver.  Thus, we may assume that $e$ is in $\pi(s, t)$.  We now claim that, for some valid replacement path $P_{s, t, e}$ protecting against the fault $e$, all but one (or all) of the edges of $P_{s, t, e}$ are in the preserver.  To see this, we invoke Lemma \ref{lem:weightedpathdecomp}: there is an edge $(x, y)$ in $P_{s, t, e}$ such that no shortest path from $s$ to $x$ and no shortest path from $t$ to $y$ in $G$ $P_{s, t, e}[s \leadsto x]$ uses $e$.  Therefore, our shortest path trees rooted at $s$ and $t$ include a shortest path from $s$ to $x$ and from $t$ to $y$, and these paths were unaffected by the failure of $e$.  Therefore, $P_{s, t, e}$ has all edges in the preserver, except possibly for $(x, y)$. There are at most $n$ edges on $\pi(s, t)$, so there are at most $n$ edge faults for which we need to include a replacement path in our preserver.  We can thus complete the preserver by adding the single missing edge for each replacement path,, paying $\leq n$ edges.
\end{proof}
With a trivial union bound, we get that any set $P$ of node pairs can be preserved using $O(\min(n|P|, n^2))$ edges.  It is natural to wonder if one can improve this union bound by doing something slightly smarter in the construction.

\begin{theorem} 
\label{thr:UBqpairs}
For any integer $1\leq p \leq {n\choose 2}$, there exists an undirected weighted graph $G$ and a set $P$ of $p$ node pairs such that every $1$-FT $P$-pairwise preserver of $G$ contains $\Omega(\min(np, n^2))$ edges.
\end{theorem}
\begin{proof}
We construct our lower bound instance by adapting the construction in Lemma \ref{lem:Sh}.  First, add a path of length $n+1$ using edges of weight $1$.  Call the nodes on the path $p_1, \dots, p_{n+1}$.  Next, create $n$ new nodes $\{v_i\}$, and add an edge of weight $1$ from $p_{n+1}$ to each $v_i$.  Then, for each $i \in [1, p]$, add a new node $x_i$ to the graph, and connect $x_i$ to $p_i$ with an edge of weight $2(n - i) + 1$.  Finally, for all $i \in [1, p], j \in [1, n]$, add an edge of weight $1$ between $x_i$ and $v_j$. Define the pair set $P$ to be $\{s\} \times \{v_i \ \mid \ i \in [1, p]\}$.  Note that the graph has $\Theta(n)$ nodes and $\Omega(n|P|)$ edges, because there are exactly $n|P|$ edges between the nodes $\{x_i\}$ and $\{v_j\}$.  We will complete the proof by arguing that all edges in $\{x_i\} \times \{v_j\}$ must be kept in the preserver.  Specifically, we claim that for any $i, j$, the edge $(x_i, v_j)$ is needed to preserve the distance of the pair $(s, v_j) \in P$ when the edge $(p_i, p_{i+1})$ faults.  To see this, note that any path from $s$ to $v_j$ must pass through {\em some} node $x_i$, and we have $\dist(s, x_i) = (i-1) + 2(n-i) + 1 = 2n - i$ for any $i$.  Since $(p_i, p_{i+1})$ has faulted, the path from $s$ to $v_j$ must intersect $x_{i'}$ for some $i' \le i$ before it intersects $x_{i''}$ for any $i'' > i$.  Therefore, the shortest $s-v_j$ path passes through $x_i$, and thus uses $(x_i, v_j)$.
\end{proof}

 
We show that the situation dramatically changes for $f=2$.
\begin{theorem}\label{thr:LBweighted2FT}
There exists an undirected weighted graph $G$ and a single node pair $(s, t)$ in this graph such that every $2$-FT $(s,t)$ preserver \fab{of $G$} requires $\Omega(n^2)$ edges. \fab{The same lower bound holds on the number of bits of space used by any exact distance sensitivity oracle in the same setting.}
\end{theorem}
\begin{proof}
For the first claim, we construct our lower bound instance as follows.  Build node disjoint paths $P_s\fab{=(s=s_0,s_1\ldots,s_{n-1})}$ and $P_t\fab{=(t=t_0,t_1,\ldots,t_{n-1})}$ of $n$ \fab{nodes} each.
All of the edges in these paths have weight zero (or sufficiently small $\eps > 0$ will do). 
Next, we add a complete bipartite graph $X \times Y$ \fab{with edges of weight $1$},\fabr{Weight 1 avoids going around in the bipartite core} where $X = \{x_0,\ldots,x_{n-1}\}$ and $Y = \{y_0,\ldots,y_{n-1}\}$ are new node sets of size $n$ each.
Finally, for each $i \in \{0, \ldots, n-1\}$, we add edges $(s_i,x_i)$ and $(t_i,y_i)$ of weight \fab{$n-i$}. See Fig. \ref{fig:lowerboundtwof:left} for an illustration of this construction.


We now claim that every $2$-FT $s$-$t$ preserver must include all edges of the bipartite graph $X \times Y$. In more detail, the edge $(x_i,y_j)$ is needed when the edges $e^{\fab{s}}_i=(s_i, s_{i+1})$ and $e^{\fab{t}}_j=(t_j, t_{j+1})$ fail. Indeed, there is path of length $n-i+n-j+1$ passing throw $(x_i,y_j)$ in $G \setminus \{e^s_i,e^t_j\}$ and any other $s$-$t$ path has length at least $n-i+n-j+2$. The first claim follows.  

For the second claim, consider the same graph as before, but with possibly some missing edges in $X\times Y$. Consider any distance sensitivity oracle for this family of instances. By querying the $s$-$t$ distance for faults $(e^{\fab{s}}_i,e^{\fab{t}}_j)$, one obtains $n-i+n-j+1$ iff the edge $(x_i,y_j)$ is present in the input graph. This way it is possible to reconstruct the edges $E'\subseteq X\times Y$ in the input instance. Since there are $\Omega(2^{n^2})$ possible input instances, the size of the oracle has to be $\Omega(n^2)$.
\end{proof}
We next consider the case of directed graphs, and prove Theorem \ref{thr:UBLB1FT}. We split its proof in the next two lemmas.  Let $\DP(n)$ describe the worst-case sparsity of a (non-FT) preserver of $n$ node pairs in a directed weighted graph.  That is, for any directed weighted $n$-node graph $G$ and set $P$ of $|P| = n$ node pairs, there exists a distance preserver of $G, P$ on at most $\DP(n)$ edges, yet there exists a particular $G, P$ for which every distance preserver has $\DP(n)$ edges.

\begin{lemma} \label{lem:dwstub} 
Given any $s$-$t$ pair in a directed weighted graph, there is a $1$-FT $s$-$t$ preserver whose sparsity is $O(\DP(n))$.
\end{lemma}
\begin{proof}
Add a shortest path $\pi(s, t)$ to the preserver, and note that we only need replacement paths in our preserver for edge faults $e$ on the path $\pi(s, t)$.  There are at most $n-1$ such edges; thus, the preserver is the union of at most $n-1$ replacement paths.  For each replacement path $P_{s, t, e}$, note that the path is disjoint from $\pi(s, t)$ only on one continuous subpath.  Let $a, b$ be the endpoints of this subpath.  Then $P_{s, t, e}[a \leadsto b]$ is a shortest path in the graph $G \setminus \pi(s, t)$, and all other edges in $P_{s, t, e}$ belong to $\pi(s, t)$.  Therefore, if we include in the preserver all edges in a shortest path from $a$ to $b$ in $G \setminus \pi(s, t)$, then we have included a valid replacement path protecting against the edge fault $e$.
By applying this logic to each of the $n-1$ possible edge faults on $\pi(s, t)$, we can protect against all possible edge faults by building any preserver of $n-1$ node pairs in the graph $G \setminus \pi(s, t)$.
\end{proof}

\begin{lemma} \label{lem:dwstlb}
There is a directed weighted graph $G$ and a node pair $s$-$t$ such that any $1$-FT $s$-$t$ preserver requires $\Omega(\DP(n))$ edges.
\end{lemma}
\begin{proof}
Let $K$ be a directed graph on $O(n)$ nodes and nonnegative edge weights, and let
$P = \{(x_1,y_1),...,(x_n,y_n)\}$ be a set of node pairs of size $n$ such that the sparsest preserver of $K, P$ has $\Omega(f(n))$ edges.

Add to $K$ a directed path $Y := (s \to a_1 \to b_1 \to a_2 \to b_2 \to \dots \to a_n \to b_n \to t)$ on $2n+2$ new nodes.
All edges in $Y$ have weight $0$ (or sufficiently small $\eps > 0$ will do).  All other edge weights in the graph will be nonnegative, so $Y$ is the unique shortest path from $s$ to $t$.

Let $M$ be the largest edge weight in $K$, and let $W=M \cdot n$. Note that $W$ is larger than the weight of any shortest path in $K$.  Now, for each $i \in [1, n]$, add an edge from $a_i$ to $x_i$ of weight $(n-i)W$ and add an edge from $y_i$ to $b_i$ of weight $iW$. See Fig. \ref{fig:lowerboundtwof:right}  for an illustration
This completes the construction of $G$.  There are $O(n)$ nodes in $G$, and so it suffices to show that all edges in a preserver of $K, P$ must remain in a $1$-FT $s$-$t$ preserver for $G$.

Let $(a_i \to b_i)$ be an edge on the path $Y$.
If this edge faults, then the new shortest path from $s$ to $t$ has the following general structure: the path travels from $s$ to $a_j$ for some $j \le i$, then it travels to $x_j$, then it travels a shortest path from $x_j$ to $y_k$ (for some $k \ge i$) in $K$, then it travels from $y_k$ to $b_k$, and finally it travels from $b_k$ to $t$.
The length of this path is then
$$(n-j)W + \dist_K(x_j,y_k) + kW = nW +(k-j)W + \dist_K(x_j,y_k).$$
Suppose that $k-j \geq 1$. Then the weight of the detour is at least $(n+1)W$ (as all distances in K are nonnegative).
On the other hand, if $k=j$ (and hence $=i$), the weight of the detour is
$$nW +\dist_K(x_i,y_k) <(n+1)W$$
because we have $W > \dist_K(x_i, y_k)$.
Thus, any valid replacement path $P_{s, t, (a_i, b_i)}$ travels from $s$ to $a_i$, then from $a_i$ to $x_i$, then along some shortest path in $K$ from $x_i$ to $y_i$, then from $y_i$ to $b_i$, and finally from $b_i$ to $t$.

Hence, any $1$-FT $s$-$t$ preserver includes a shortest path from $x_i$ to $y_i$ in $K$ for all $i \in [1, n]$.  Therefore, the number of edges in this preserver is at least the optimal number of edges in a preserver of $K, P$; i.e. $\DP(n)$ edges.  Since $G$ has $O(n)$ nodes, the theorem follows.
\end{proof}

\begin{figure}
\centering
\begin{subfigure}{.4\textwidth}
  \centering
  \includegraphics[width=.8\linewidth]{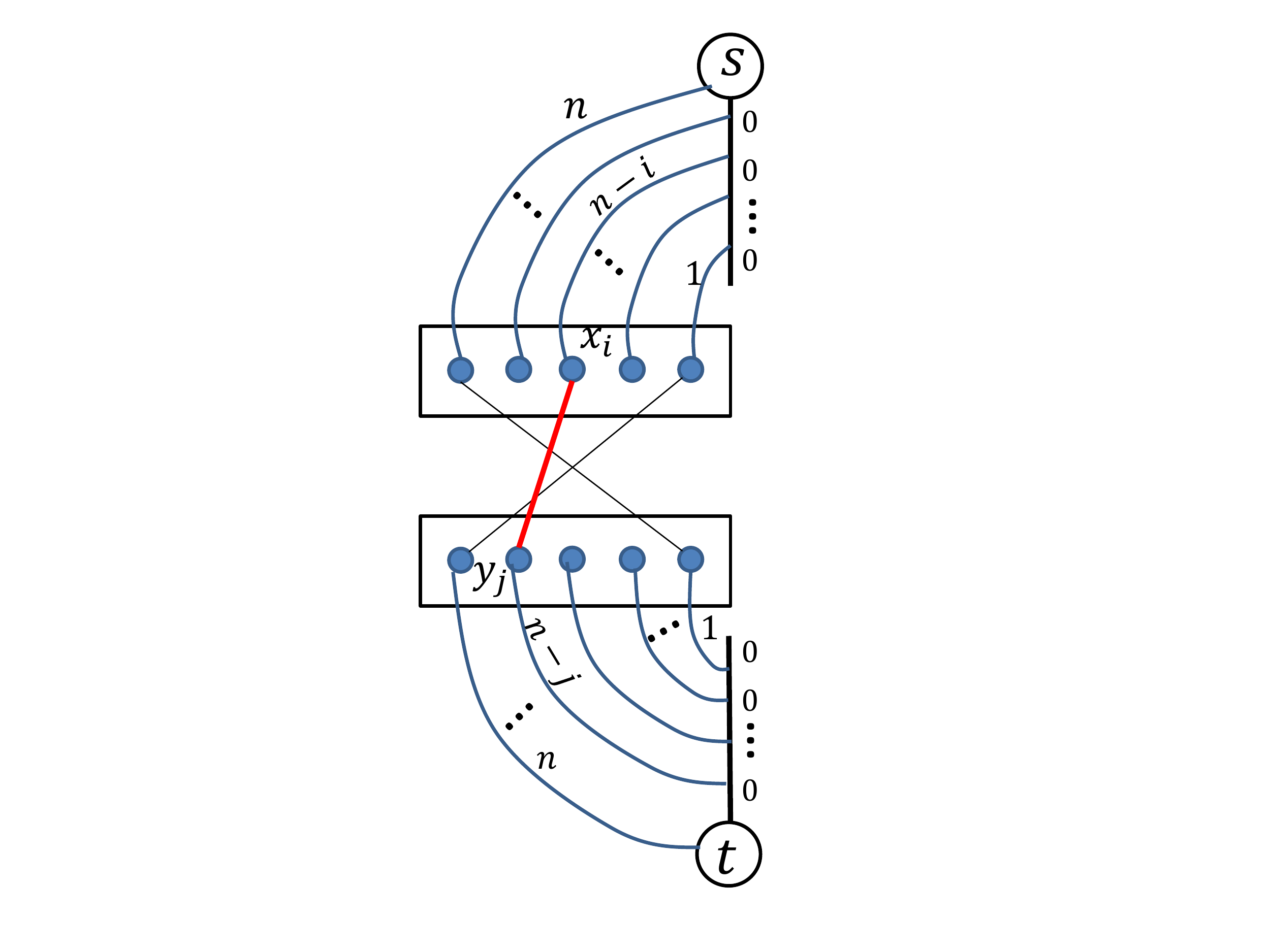}
  \caption{Lower bound example for single pair and two faults. When the $i$th edge on the $s$-path fails, the new shortest path to $t$ must go through $x_i$. Similarly, when the $j$th edge on the $t$-path fails, the shortest path to $s$ must go through $y_j$. Hence, the shortest path in $G \setminus \{e_i,e_j\}$ uses the edge $(x_i,y_j)$.}
  \label{fig:lowerboundtwof:left}
\end{subfigure}%
\hspace{1cm}
\begin{subfigure}{.4\textwidth}
  \centering
  \includegraphics[width=.8\linewidth]{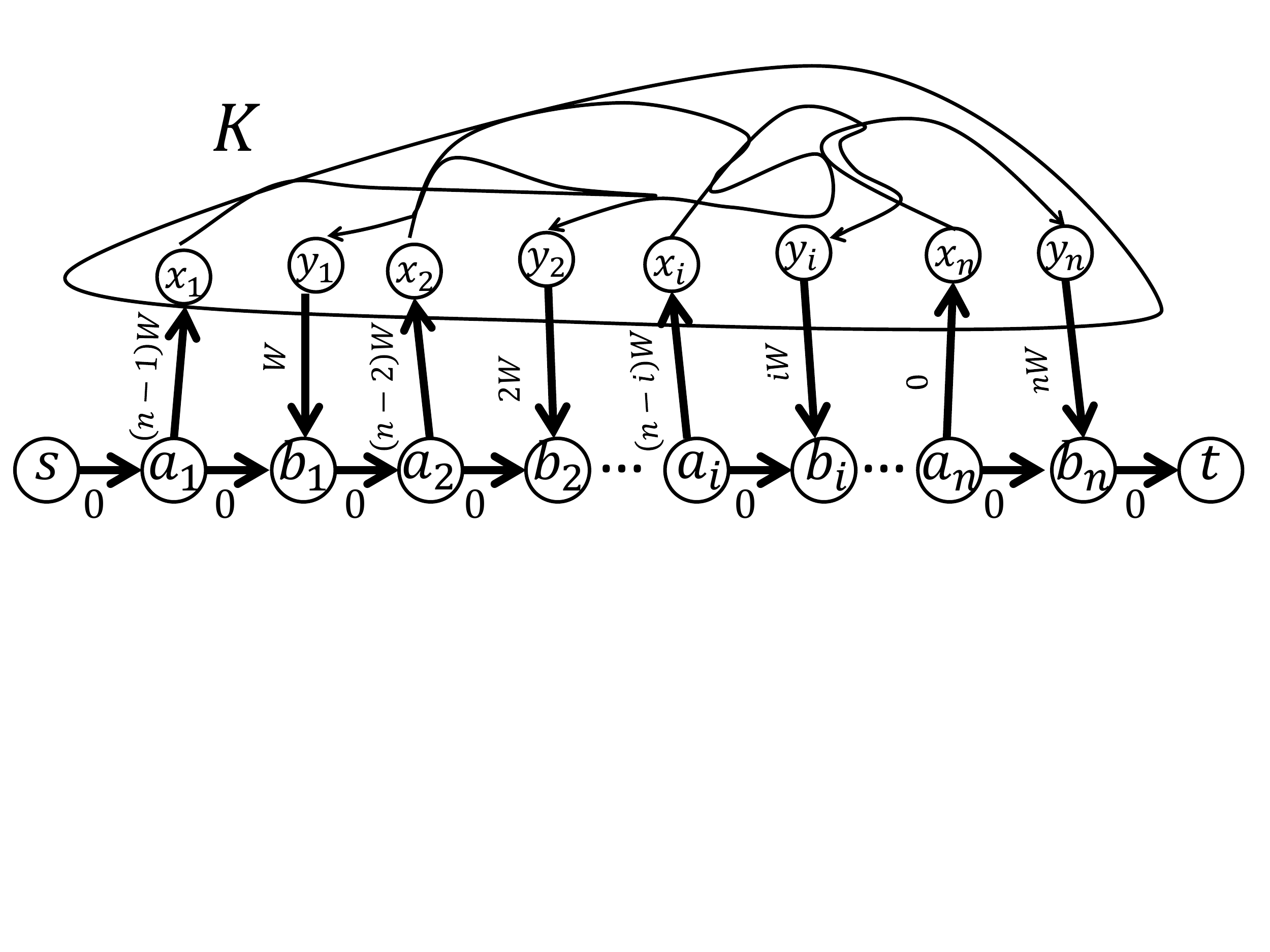}
  \caption{Lower bound construction for a single pair in the directed weighted case.  Here, $K$ is an arbitrary lower bound graph for (non-FT) distance preservers of $n$ pairs in the directed weighted setting, which can be modularly substituted in to our construction.}
  \label{fig:lowerboundtwof:right}
\end{subfigure}
\caption{Lower bounds for weighted graphs.}
\label{fig:lowerboundtwof}
\end{figure}

\section{Open Problems}

There are lots of open ends to be closed. Perhaps the main open problem is to resolve the current gap for $f$-FT single-source preservers. Since the lower bound of $\Omega(n^{2-1/(f+1)})$ edges given in \cite{ParterPODC15} has been shown to be tight for $f \in [1,2]$, it is reasonable to believe that this is the right bound for $f \geq 3$. Another interesting open question involves lower bounds for FT additive spanners. Our lower-bounds are super linear only for $f\geq 2$. The following basic question is still open though: is there a lower bound of $\Omega(n^{3/2+\epsilon})$ edges for some $\epsilon \in (0,1]$ for $2$-additive spanners with \emph{one} fault? Whereas our lower bound machinery can be 
adapted to provide non trivial bounds for different types of $f$-FT $P$-preservers (e.g., $P=\{s,t\}, P=S \times T$, etc.), our upper bounds technique for general $f\geq 2$ is still limited to the sourcewise setting. Specifically, it is not clear how to construct an $f$-FT $S \times S$ preservers other than taking the (perhaps wasteful) $f$-FT $S$-sourcewise preservers.
As suggested by our lower bounds, these questions are interesting already for a single pair.
\newpage

\bibliographystyle{alpha}
\bibliography{ftref}

\appendix

\section{Tables}
\label{sec:tabFig}

\begin{tabular}{ |c|c|c|c| }
\hline
\multicolumn{4}{ |c| }{$f$-VFT (or EFT) Preservers} \\
\hline
& \shortstack{$S$-sourcewise \\(unweighted)} & \shortstack{pairs $P$, $f=1$ \\ (weighted)} & \shortstack{Single pair $s$-$t$, $f=1$\\(weighted+directed)} \\ \hline
\multirow{2}{*}{\shortstack{Upper \\ Bound}} & $\widetilde{O}(f|S|^{1/2^f} \cdot n^{2-1/2^f})$ & $O(\min\{n^2,n|P|\})$ & $O(DP(n))=O(n^{3/2})$\\
 & \mbox{\textbf{New}} & \mbox{\textbf{New}} 
 & \mbox{\textbf{New}} \\ \hline
\multirow{2}{*}{\shortstack{Lower \\ Bound}} & $\Omega(|S|^{1/(f+1)} \cdot n^{2-1/(f+1)})$ & $\Omega(\min\{n^2,n|P|\})$ & $\Omega(DP(n))=\Omega(n^{4/3})$ \\
 & \cite{ParterPODC15} & \mbox{\textbf{New}} & \mbox{\textbf{New}}\\
\hline
\end{tabular}

\section{Omitted Details of Section \ref{sec:lbST}}
\label{sec:apxLB}

\begin{figure}
\centering
\begin{subfigure}{.5\textwidth}
  \centering
  \includegraphics[width=.8\linewidth]{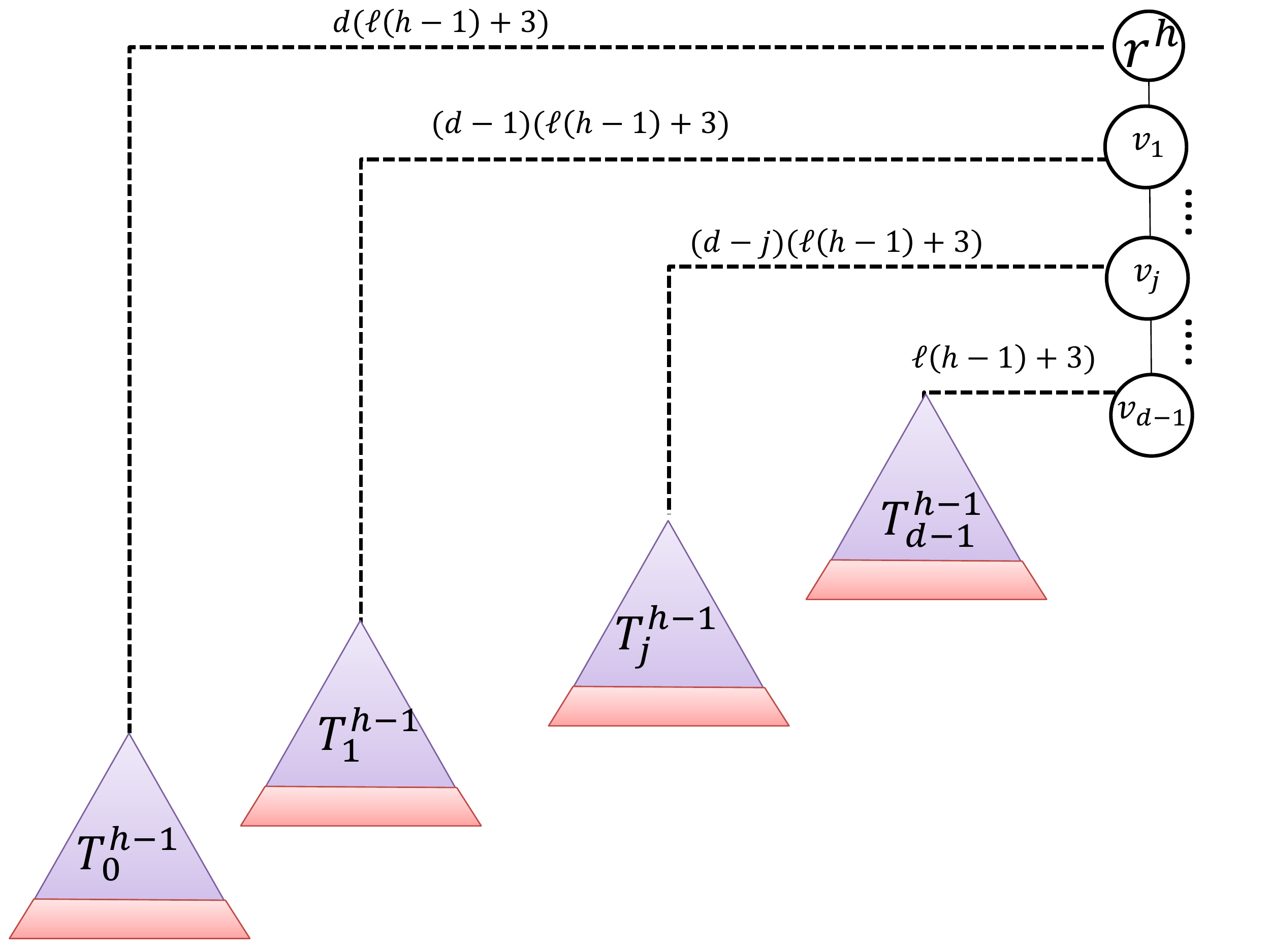}
  \caption{Structure of $\tree^h$ tree}
  \label{fig:Th}
\end{subfigure}%
\begin{subfigure}{.5\textwidth}
  \centering
  \includegraphics[width=.8\linewidth]{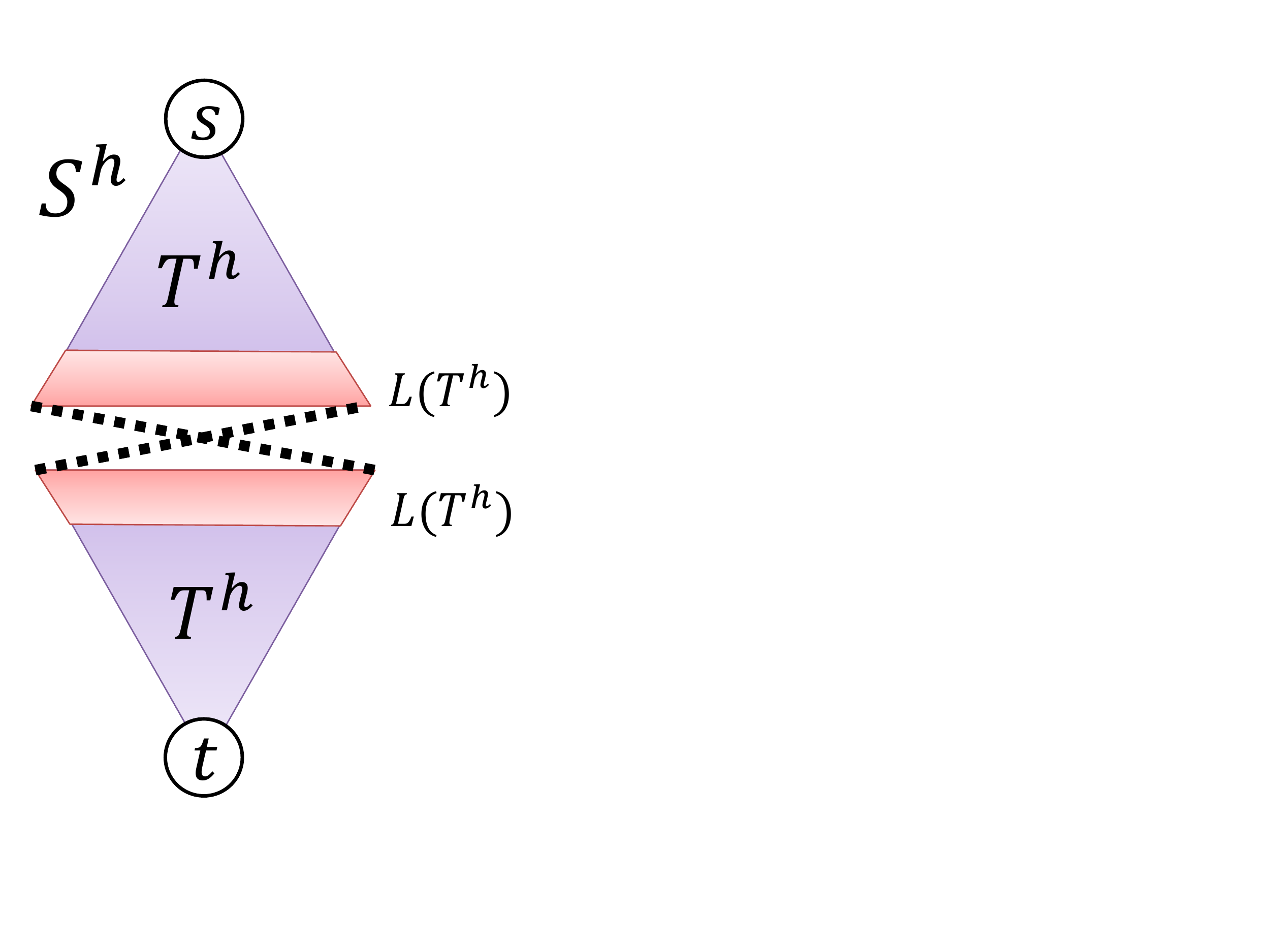}
  \caption{The graph $S^h$ used to provide a lower bound  for $s$-$t$ pair and $1$-additive stretch}
  \label{fig:Sh}
\end{subfigure}
\begin{subfigure}{.4\textwidth}
  \centering
  \includegraphics[width=.8\linewidth]{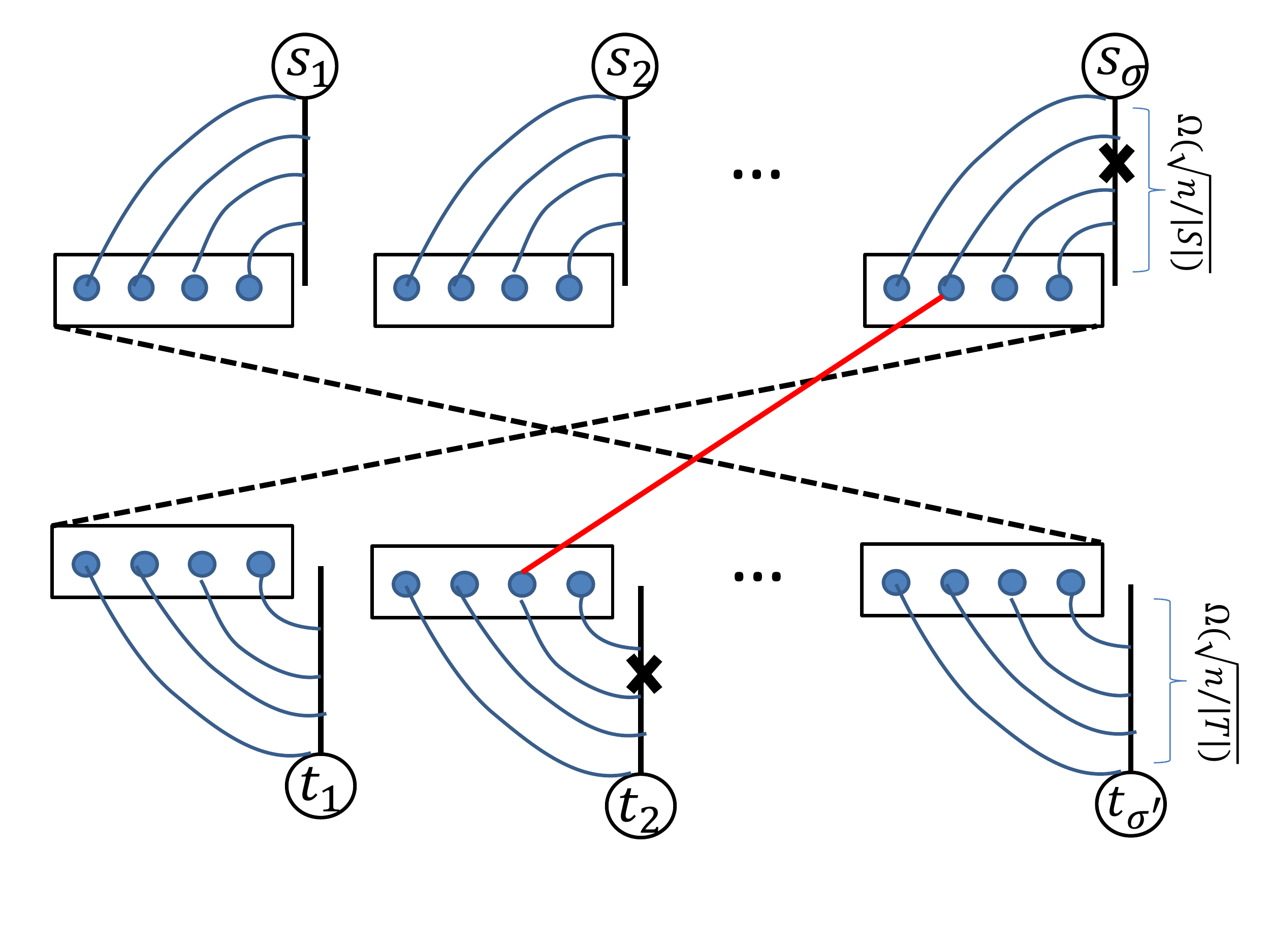}
  \caption{Lower bound for $S \times T$ preserver (and $1$-additive) for the case $f=2$}
  \label{fig:ST}
\end{subfigure}
\begin{subfigure}{.4\textwidth}
  \centering
  \includegraphics[width=.8\linewidth]{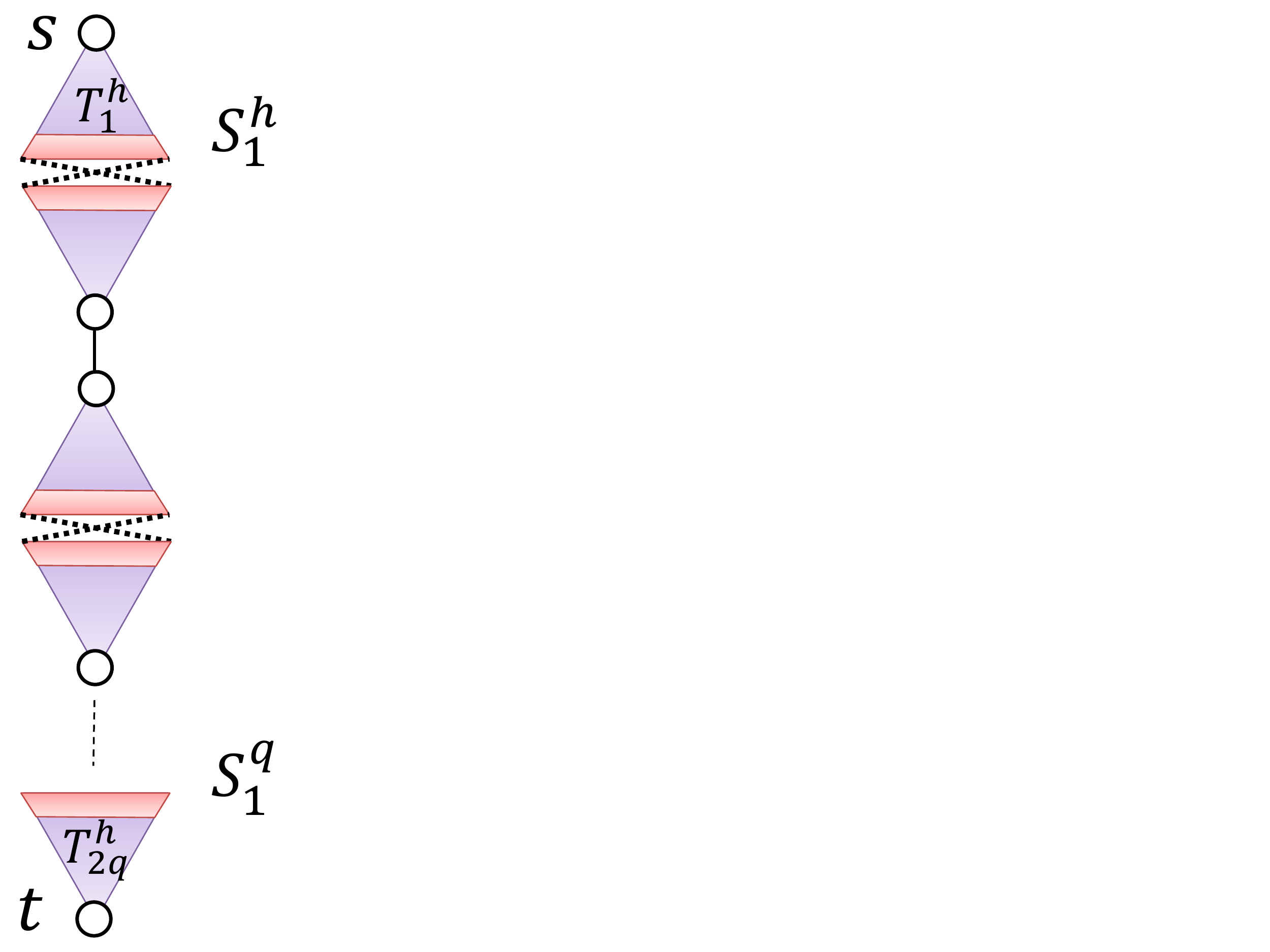}
  \caption{The graph $S^h_q$ used to provide lower bound for additive stretch $\beta=2q-1$ already for a single pair $s$-$t$}
  \label{fig:Shq}
\end{subfigure}
\caption{Illustration of the lower bound constructions.}
\label{fig:lowerboundBasic}
\end{figure}

\subsection{Properties of $\tree^h$}

We next prove Lemma \ref{lem:propertiesT}.
\begin{lemma}\label{lem:heightT-nodesT}
$\ell(h)=3((d+1)^h-1)$ and $n(h)\leq \frac{3}{2}(h+1)(d+1)^{h+1}$.
\end{lemma}
\begin{proof} 
Let us prove the first claim by induction on $h$. The claim is trivially true for $h=0$. Next suppose it holds up to $h-1\geq 0$, and let us prove it for $h$. All the subtrees $\tree^{h-1}_j$ used in the construction of $\tree^h$ have the same height $\ell(h-1)$ which is $3((d+1)^{h-1}-1)$ by the inductive hypothesis. The distance between $rt(\tree^h)$ and $rt(\tree^{h-1}_j)$ is $j+(d-j)(\ell(h-1)+3)$ which is a decreasing function of $j$. In particular, the maximum such distance is the one to $rt(\tree^{h-1}_0)$, which is $d(\ell(h-1)+3)$. We conclude that
$$
\ell(h)=\ell(h-1)+d(\ell(h-1)+3)=3d+(d+1)3\left((d+1)^{h-1}-1\right)=3(d+1)^h-3.
$$

The second claim is trivially true for $h=0$. The number of nodes in $\tree^h$ is given by $d$ times the number of nodes in $\tree^{h-1}$, plus the sum of the \emph{lengths} of the paths connecting each $v_j$ to $rt(\tree^{h-1}_j)$, i.e. 
\begin{align*}
n(h) & =d\cdot n(h-1)+\sum_{j=0}^{d-1}(d-j)(\ell(h-1)+3) \overset{\substack{first\\ claim}}{=} d\cdot n(h-1)+3(d+1)^{h-1}\frac{d(d+1)}{2}\\
& \overset{\substack{Induct.\\ hypoth.}}{\leq} d\cdot \frac{3}{2}h(d+1)^{h}+\frac{3}{2}d(d+1)^{h} =\frac{3}{2}(h+1)d(d+1)^h\leq \frac{3}{2}(h+1)(d+1)^{h+1}.
\end{align*}
\end{proof}

We next need a more technical lemma which will be useful to analyze the stretch. An easy inductive proof shows that $|L(\tree^h)|=d^h$. It is convenient to sort these leaves from left to right using the following inductive process.  The base case is that $\tree^0$ has a unique leaf (the root) which obviously has a unique ordering.
For the inductive step, given the sorting of the leaves of $\tree^{h-1}$, the sorting for $\tree^h$ is achieved by placing all the leaves in the subtree $\tree^{h-1}_j$ to the left of the leaves of the subtree $\tree^{h-1}_{j+1}$, $j=0,\ldots,d-2$ (the leaves of each $\tree^{h-1}_j$ are then sorted recursively). Given this sorting, we will name the leaves of $\tree^H$ (from left to right) $\ell^h_0,\ldots,\ell^h_{d^h-1}$. 

For each leaf $\ell^h_j$, we next recursively define a subset of at most $h$ (faulty) edges $F^h_j$. Intuitively, these are edges that we can remove to make $\ell^h_j$ the closest leaf to the root. We let $F^0_j=\emptyset$. Suppose that $\ell^h_j$ is the $r$-th leaf from left to right of $\tree^{h-1}_t$ (in zero-based notation). Then $F^h_j$ is given by the edges of type $F^{h-1}_r$ in $\tree^{h-1}_t$, plus edge $(v_t,v_{t+1})$ if $t<d-1$. Note that obviously $|F^h_j|\leq h$.
\begin{lemma}\label{lem:faultyT}
One has that $\ell^h_j$ is the leaf at minimum finite distance $d':=\dist_{\tree^h - F^h_j}(rt(\tree^h),\ell^h_j)$ from $rt(\tree^h)$ in $\tree^h- F^h_j$, and any other leaf in $L(\tree^h)-\{\ell^h_j\}$ is at distance at least $d'+2$ from $rt(\tree^h)$. 
\end{lemma}
\begin{proof}
Once again the proof is by induction. Let $r^h:=rt(\tree^h)$. The claim is trivially true for $h=0$ since there is a unique leaf $\ell^0_0=r^0$. Next assume the claim is true up to $h-1\geq 0$, and consider $\tree^h$. Consider any leaf $\ell^h_j$, and with the same notation as before assume that it is the $r$-th leaf from left to right of $\tree^{h-1}_t$. Observe that by removing edge $(v_t,v_{t+1})$ we disconnect from $r^h$ all nodes in subtrees $\tree^{h-1}_{t'}$ with $t'>t$. In particular the distances from $r^h$ to the leaves of those subtrees becomes unbounded. Next consider a leaf $\ell'$ in a tree $\tree^{h-1}_{t'}$ with $t'<t$. By construction we have that 
\begin{align*}
\dist_{G - F^h_j}(r^h,\ell') &\geq \dist_G(r^h,\ell')\\
&\geq \dist_G(r^h,rt(\tree^{h-1}_{t'}))\\
&= t'+(d-t')(\ell(h-1)+3)\\
&\geq (t-1)+(d-t+1)(\ell(h-1)+3)\\
&= t+(d-t)(\ell(h-1)+3)+\ell(h-1)+2.
\end{align*}
On the other hand, any leaf in $L(\tree^{h-1}_t)$ which is still connected to $rt(\tree^{h-1}_t)$ has distance at most $t+(d-t)(\ell(h-1)+3)+\ell(h-1)$ from $r^h$. Recall that we are removing the edges of type $F^{h-1}_r$ from $\tree^{h-1}_t$. Note also that $\ell^h_j$ corresponds to leaf $\ell^{h-1}_r$ in $\tree^{h-1}_t$. Hence, by the inductive hypothesis, $\ell^h_j$ is the leaf of $\tree^{h-1}_t$ at minimum finite distance $d'$ from $r^{h-1}_t$, and any other such leaf is at distance at least $d'+2$ from $r^{h-1}_t$. The claim follows.  
\end{proof}

\begin{proof}[Proof of Lemma \ref{lem:propertiesT}]
Claim \ref{prop:numberNodes} is given by Lemma \ref{lem:heightT-nodesT}, Claim \ref{prop:numberLeaves} by a trivial induction, and Claim \ref{prop:faultSet} by Lemma \ref{lem:faultyT}. 
\end{proof}

\subsection{Improvement with Preserver Lower Bounds}

We next prove Theorem \ref{thm:subs-lb-improvement}. We need the following technical lemma.
\begin{lemma} [\cite{B17}] \label{lem:subs-pres-lb}
For all $n$, there is an undirected unweighted bipartite graph $G = (V, E)$ on $n$ nodes and $\Omega(n^{11/10 - o(1)})$ edges, as well as disjoint node subsets $S, T \subseteq V$ with $|S| = |T| = \Theta(n^{1/2})$ such that the following properties hold:
\begin{itemize}
\item For each edge $e \in E$, there is a pair of nodes $s \in S, t \in T$ with $\dist(s, t) = L$ (for some parameter $L$) such that every shortest $(s, t)$ path includes $e$.
\item For all $s \in S, t \in T$, we have $\dist(s, t) \ge L$.
\end{itemize}
\end{lemma}

The construction for Theorem \ref{thm:subs-lb-improvement} proceeds as follows.
By Lemma \ref{lem:heightT-nodesT}, the number of leaves in $\tree^1$ is $\ell = \Theta(d)$ and the number of nodes in $\tree^1$ is $n = \Theta(d^2)$, so we have $\ell = \Theta(n^{1/2})$.
As before, let $\tree_s, \tree_t$ be copies of $\tree^1$ rooted at $s, t$ respectively.
Now, let $H$ be a graph drawn from Lemma \ref{lem:subs-pres-lb}, with node subsets $S, T$, where the number of nodes $n_H$ is chosen such that $|S| = |T| = \ell(s) = \ell(t)$.
We add a copy of $H$ to the graph $\tree_s \cup \tree_t$, where $\ell(s)$ is used as the node set $S$, $\ell(t)$ is used as the node set $T$, and $O(n)$ new nodes are introduced to serve as the remaining nodes in $H$.
Note that the new graph $G = \tree_s \cup \tree_t \cup H$ now has $N = \Theta(n)$ nodes, so it (still) has $N^{11/10 - o(1)}$ edges in its internal copy of $H$.

\begin{lemma} \label{lem:req-in-g}
In $G$, we have:
\begin{itemize}
\item For each edge $e$ in the internal copy of $H$, there exist nodes $u \in \ell(s) = S, v \in \ell(t) = T$ with $\dist(u, v) = L$ such that every shortest $(u, v)$ path (in $G$) includes $e$.
\item For all $u \in \ell(s) = S, v \in \ell(t) = T$, we have $\dist(u, v) \ge L$.
\end{itemize}
\end{lemma}
\begin{proof}
First, we observe that if any shortest $(u, v)$ path $\pi(u, v)$ in $G$ contains a node $x$ \emph{not} in the internal copy of $H$, then we have $\dist(u, v) \ge L+1$.
To see this, note that by construction any $(x, v)$ path must contain a subpath $\pi(u', v') \subseteq H$ between nodes $u' \in S, v' \in T$.
By Lemma \ref{lem:subs-pres-lb} this subpath has length at least $L$.
Since the path $\pi(u, v)$ contains $x \notin \pi(u', v')$, we then have $|\pi(u, v)| \ge L+1$.

The second point in this lemma is now immediate: if $\pi(u, v)$ is contained in $H$ then we have $\dist(u, v) \ge L$ from Lemma \ref{lem:subs-pres-lb}; if $\pi(u, v)$ is not contained in $H$ then we have $\dist(u, v) \ge L+1$ from the above argument.
For the first point, note that by Lemma \ref{lem:subs-pres-lb}, there is a pair $u \in S, v \in T$ such that $\dist_H(u, v) = L$ and every shortest $(u, v)$ path in $H$ includes $e$.
Since we then have $\dist_G(u, v) \le L$, by the above it follows that $\pi(u, v)$ is contained in $H$, and the lemma follows.
\end{proof}

We can now show:
\begin{proof} [Proof of Theorem \ref{thm:subs-lb-improvement}]
Let $e$ be any edge in the internal copy of $H$ in $G$.
By Lemma \ref{lem:req-in-g}, there is a pair of nodes $u \in \ell(s) = S, v \in \ell(t) = T$ such that every $(u, v)$ shortest path in $G$ includes $e$.
Also, by Lemma \ref{lem:faultyT}, there are faults $f_1 \in \tree_s, f_2 \in \tree_t$ such that $u$ is the leaf of $\tree_s$ at minimum distance $d'_s$ from $s$ in $\tree_s \setminus \{f_1\}$, and $v$ is the leaf of $\tree_t$ at minimum distance $d'_t$ from $t$ in $\tree_t \setminus \{f_2\}$.

Thus, under fault set $\{f_1, f_2\}$, we have
$$\dist_{G \setminus \{f_1, f_2\}}(s, t) \le d'_s + L + d'_t$$
since one possible $(s, t)$ path is obtained by walking a shortest path from $s$ to $u$, then from $u$ to $v$, then from $v$ to $t$.
Moreover, any $(s, t)$ path $Q$ in $G \setminus \{f_1, f_2\}$ that does not include $u$ (or $v$) must include some other leaf $u' \ne u \in \ell(s)$, so it has length at least
$$|Q| \ge \dist_{G \setminus \{f_1, f_2\}}(s, u') + \dist_{G \setminus \{f_1, f_2\}}(u', v') + \dist_{G \setminus \{f_1, f_2\}}(v', t)$$
(for some leaf $v' \in \tree_t$, possibly equal to $v$).
By Lemmas \ref{lem:faultyT} and \ref{lem:req-in-g}, this implies
$$|Q| \ge (d'_s + 2) + L + d'_t > \dist_{G \setminus \{f_1, f_2\}}(s, t).$$
Therefore $Q$ is a non-shortest path, and so every shortest $(s, t)$ path in $G \setminus \{f_1, f_2\}$ includes nodes $u$ and $v$, and so (by Lemma \ref{lem:req-in-g}) it includes the edge $e$.
We then cannot remove the edge $e$ without destroying all shortest $(s, t)$ shortest paths in $G \setminus \{f_1, f_2\}$, so all $N^{11/10 - o(1)}$ edges in the internal copy of $H$ must be kept in any $2$-FT distance preserver of $(s, t)$.
\end{proof}

\begin{remark}
It is natural to expect that a similar improvement to the lower bound may be possible for $h > 1$; that is, we could imagine again augmenting the bipartite core with a subset preserver lower bound.
While it is conceivable that this technique may eventually be possible, it currently does not work: for $h > 1$ we have $\ell(s) = \Omega(n(s)^{2/3})$, and it is currently open to exhibit a lower bound $G = (V, E), S$ for subset preservers in which $|S| = \Omega(n^{2/3})$ and $|E| = \omega(|S|^2)$.
In other words, for $h > 1$, the current best known subset preserver lower bound that could be used is just a complete bipartite graph, and so we can do no better than the construction using bipartite cores.
\end{remark}

\subsection{Improvement with Spanner Lower Bounds}

We next prove Theorem \ref{thm:gap-pair-lb}.
Our new ``inner graph'' that replaces the bipartite core is drawn from the following lemma\footnote{The result proved in \cite{AmirGreg} is more general than this one; the parameters have been instantiated in this statement to suit our purposes.}:
\begin{lemma} [\cite{AmirGreg}] \label{lem:43-spanner-lb}
There are absolute constants $\eps, \delta > 0$, a family of $n$-node graphs $G = (V, E)$, node subsets $S, T \subseteq V$ of size $|S| = |T| = \Theta(n^{1/2 - \delta})$, and a set $P \subseteq S \times T$ such that any subgraph $H \subseteq G$ on $o(n^{1 + \eps})$ edges has $\dist_H(s, t) > \dist_G(s, t) + n^{\delta}$ for some $(s, t) \in P$.
Moreover, we have $\dist_G(s, t) = L$ for all $(s, t) \in P$, and $\dist_G(s, t) \ge L$ for all $(s, t) \in S \times T$.
\end{lemma}
We now describe our construction.
First, as before, we take trees $\tree_s, \tree_t$ which are copies of $\tree^1$ rooted at $s, t$ respectively.
We now label leaves of $\tree_s$ (and $\tree_t$) as \emph{good leaves} or \emph{bad leaves} using the following iterative process.
Arbitrarily select a leaf $\ell$ and label it a \emph{good leaf}.
Next, for all leaves $\ell'$ satisfying $\dist_{\tree_s}(s, \ell') \in [\dist_{\tree_s}(s, \ell) - n^{\delta}, \dist_{\tree_s}(s, \ell) + n^{\delta}]$, we label $\ell'$ a bad leaf.
We then arbitrarily select another good leaf from among the unlabelled leaves, and repeat until all leaves have a label.
Note that we have $\Theta(n^{1/2})$ leaves of $\tree_s$; by construction $\dist_{\tree_s}(s, \ell) \ne \dist_{\tree_s}(s, \ell')$ for any two leaves $\ell, \ell'$, and so the total number of good leaves is $\Theta(n^{1/2 - \delta})$.
Note:
\begin{lemma} [Compare to Lemma \ref{lem:faultyT}] \label{lem:faultyT-good}
One has that any \textbf{good} leaf $\ell^h_j$ is the leaf at minimum finite distance $d':=\dist_{\tree^h - F^h_j}(rt(\tree^h),\ell^h_j)$ from $rt(\tree^h)$ in $\tree^h- F^h_j$, and any other \textbf{good} leaf in $L(\tree^h)-\{\ell^h_j\}$ is at distance at least $d'+n^{\delta}$ from $rt(\tree^h)$. 
\end{lemma}
\begin{proof}
Immediate from Lemma \ref{lem:faultyT} and the selection of good leaves.
\end{proof}

We insert a graph $H$ drawn from Lemma \ref{lem:43-spanner-lb} into the graph $\tree_s \cup \tree_t$, using the good leaves of $\tree_s$ as the set $S$ and the good leaves of $\tree_t$ as the set $T$ (as before, all other nodes in $H$ are newly added to the graph in this step).
Note that the final graph $\tree_s \cup H \cup \tree_t$ still has $N = \Theta(n)$ nodes.
This completes the construction.

We now argue correctness, i.e. we show that one cannot sparsify the final graph to $o(N^{1 + \eps})$ edges without introducing $+n^{\delta}$ error in the $s$-$t$ distance for some well-chosen set of two faults.
The proof is essentially identical to the one used above, but we repeat it for completeness.

\begin{proof} [Proof of Theorem \ref{thm:gap-pair-lb}]
Let $G' = (V, E')$ be a subgraph of the final graph $G = (V, E)$ with $|E'| = o(N^{1 + \eps})$ edges.
By Lemma \ref{lem:43-spanner-lb}, there is a pair of good leaves $(\ell_s, \ell_t) \in P$ such that
$$\dist_{G'[H]}(\ell_s, \ell_t) > L + n^{\delta}$$
(where $G'[H]$ denotes the copy of $H$ in $G'$).
By Lemma \ref{lem:faultyT-good}, there are faults $f_1, f_2$ such that $\ell_s$ is the leaf at minimum distance $d'_s$ from $s$, $\ell_t$ is the leaf at minimum distance $d'_t$ from $t$, $(\ell_s, \ell_t) \in P$, and the distance from $s$ (resp. $t$) to any other good leaf $\ell'_s \ne \ell_s$ (resp. $\ell'_t \ne \ell_t$) is at least $d'_s + n^{\delta}$ (resp. $d'_t + n^{\delta}$).
Thus, we have
$$\dist_{G \setminus \{f_1, f_2\}}(s, t) \le d'_s + L + d'_t.$$
We now lower bound this distance in $G'$.
As before, there are two cases: either the shortest $(s, t)$ path in $G'$ traverses a shortest $(\ell_s, \ell_t)$ path in $G'[H]$, or it does not.
If so, then we have
$$\dist_{G' \setminus \{f_1, f_2\}}(s, t) \ge d'_s + (L + n^{\delta}) + d'_t \ge \dist_{G \setminus \{f_1, f_2\}}(s, t) + n^{\delta}$$
and so $G'$ is not a $2$-FT $+n^{\delta}-1$ $(s, t)$ preserver of $G$, and the theorem follows.
Otherwise, if the shortest $(s, t)$ path in $G'$ does \emph{not} traverse a shortest $(\ell_s, \ell_t)$ path in $G'[H]$, then by construction it passes through (w.l.o.g.) some good leaf $\ell'_s \ne \ell_s \in \tree_s$ and $\ell'_t \in \tree_s$ (where $\ell'_t$ is possibly equal to $\ell_t$).
We then have
\begin{align*}
\dist_{G' \setminus \{f_1, f_2\}}(s, t) &\ge \dist_{G' \setminus \{f_1, f_2\}}(s, \ell'_s) + \dist_{G \setminus \{f_1, f_2\}}(\ell'_s, \ell'_t) + \dist_{G' \setminus \{f_1, f_2\}}(\ell'_t, t)\\
&\ge (d'_s + n^{\delta}) + L + d'_t\\
&\ge \dist_{G \setminus \{f_1, f_2\}}(s, t) + n^{\delta}
\end{align*}
and the theorem follows.
\end{proof}

Finally, by tolerating one additional fault, we can obtain a strong incompressibility result, hence proving Theorem \ref{thm:gap-compression-lb}:
The proof is \emph{nearly} identical to the above, but there are two key differences.
First, we use unbalanced trees: $\tree_s$ is a copy of $\tree^1$ while $\tree_t$ is a copy of $\tree^2$.
Hence, there are three total faults in the sets $F^s_{\ell_s} \cup F^t_{\ell_t}$ used to ``select'' the appropriate leaves $s, t$.
We define good leaves exactly as before.
We use a slightly different lemma for our inner graph (which is proved using the same construction):
\begin{lemma} [\cite{AmirGreg}] \label{lem:43-spanner-lb-inc}
There is an absolute constant $\delta > 0$, a family of $n$-node graphs $G = (V, E)$, node subsets $S, T \subseteq V$ of size $|S| = \Theta(n^{1/2 - \delta}), |T| = \Theta(n^{2/3 - \delta})$, and a set $P \subseteq S \times T$ of size $|P| = |S||T|n^{-o(1)}$ with the following property: for each pair $(s, t) \in P$, we may assign a set of edges in $G$ to $p$ such that (1) no edge is assigned to two or more pairs, and (2) if all edges assigned to a pair $(s, t)$ are removed from $G$, then $\dist_(s, t)$ increases by $+n^{\delta}$.
Moreover, $\dist(s, t) = L$ for all $(s, t) \in P$, and $\dist(s, t) \ge L$ for all $(s, t) \in S \times T$.
\end{lemma}

\begin{proof} [Proof of Theorem \ref{thm:gap-compression-lb}]
Let $G, P$ be a graph and pair set drawn from Lemma \ref{lem:43-spanner-lb-inc}.
Define a family of $2^{|P|}$ subgraphs by independently keeping or removing all edges assigned to each pair in $P$.
We will argue that any $n^{\delta/2}$-\fab{additive} distance sensitivity oracle must use a different representation for each such subgraph, and thus, $|P| = \Omega(n^{1 + \eps})$ bits of space are required in the worst case.

Suppose towards a contradiction that a distance sensitivity oracle uses the same space representation for two such subgraphs $G_1, G_2$, and let $(\ell_s, \ell_t) \in P$ be a pair for which its owned edges are kept in $G_1$ but removed in $G_2$.
By an identical argument to the one used in Theorem \ref{thm:gap-pair-lb}, we have
$$\dist_{G_1 \setminus \{f_1, f_2, f_3\}}(s, t) \le d'_s + L + d'_t$$
for fault set $\{f_1, f_2, f_3\} = F^s_{\ell_s} \cup F^t_{\ell_t}$ (where $d'_s, d'_t$ are defined exactly as before).
Meanwhile, also by the same argument used in Theorem \ref{thm:gap-pair-lb}, we have
$$\dist_{G_2 \setminus \{f_1, f_2, f_3\}}(s, t) \ge d'_s + L + d'_t + n^{\delta} \ge \dist_{G_1 \setminus \{f_1, f_2, f_3\}}(s, t) + n^{\delta}.$$
Since $G_1, G_2$ are stored identically by the distance sensitivity oracle, it must answer the query $\{f_1, f_2, f_3\}$ identically for both graphs.
However, since the right answer differs by $+n^{\delta}$ from $G_1$ to $G_2$, it follows that the oracle will have at least $+n^{\delta}/2-1$ error on one of the two instances.
\end{proof}

\subsection{Lower Bound for $S\times T$ Preservers}
\label{sec:LB_ST}

\begin{theorem}
\label{thr:lowerboundST}
For every positive integer $f$, there exists a graph $G=(V,E)$ and subsets $S,T \subseteq V$, such that every $(2f)$-FT $1$-additive $S \times T$ spanner (hence $S \times T$ preserver) of $G$ has size $\Omega(|S|^{1/(f+1)} \cdot  |T|^{1/(f+1)} \cdot (n/f)^{2-2/(f+1)})$.
\end{theorem}
\begin{proof}
The graph $G$ is constructed as follows (see also Figure \ref{fig:ST}). For each $s_i\in S$, we construct a copy $\tree_{s_i}$ of $\tree^f$ rooted at $s_i$ with size parameter
$$d_S=\left(\frac{n}{3(f+1)|S|}\right)^{\frac{1}{f+1}}-1.$$
Similarly, for each $t_j\in T$, we construct a copy $\tree_{t_j}$ of $\tree^f$ rooted at $t_j$  with size parameter
$$d_T=\left(\frac{n}{3(f+1)|T|}\right)^{\frac{1}{f+1}}-1.$$
Finally, we add a complete bipartite graph between the leaves of each $\tree^f_{s_i}$ and the leaves of each $\tree^f_{t_j}$. We call the edges of the last type the bipartite core of $G$.

Note that by Lemma \ref{lem:heightT-nodesT} the total number of nodes is $n$. Furthermore, the bipartite core has size
$$
|S|\, |T|\, d_S^{f} \, d_\tree^{f} = \Omega\left(|S|\, |T|\, \left(\frac{n^{2}}{9(f+1)^2|S||T|}\right)^{\frac{f}{f+1}}\right)=\Omega\left(|S|^{\frac{1}{f+1}}|T|^{\frac{1}{f+1}}\left(\frac{n}{f}\right)^{2-\frac{2}{f+1}}\right)  
$$
The rest of the proof follows along the same line as in Lemma \ref{lem:Sh}: given any edge $e=(\ell_{s_i},\ell_{t_j})$ between a leaf $\ell_{s_i}$ of $T_{s_i}$ and a leaf $\ell_{t_j}$ of $T_{t_j}$, removing $e$ would cause an increase of the stretch between $s_i$ and $t_j$ by at least an additive $2$ for a proper choice of $f$ faults in both $\tree_{s_i}$ and $\tree_{t_j}$.
\end{proof}

\end{document}